\documentclass[twocolumn]{IEEEtran} 


\usepackage{amsmath,amssymb,amsfonts}
\usepackage[final]{graphicx}
\usepackage{psfrag}
\usepackage{mwe}
\usepackage[tight]{subfigure}
\usepackage[numbers,sort&compress]{natbib}
\usepackage{multicol}
\usepackage{mathrsfs} 
\usepackage{epstopdf}
\usepackage{multirow}
\usepackage{amsthm,mathtools}
\usepackage{float}
\usepackage{booktabs}  
\usepackage{multirow} 
\usepackage{relsize}
\usepackage{bm}
\usepackage{hyperref}
\usepackage{tabularx}
\usepackage[nolist,nohyperlinks]{acronym}
\usepackage{svg}
\usepackage{xparse}
\NewDocumentCommand{\evalat}{sO{\big}mm}{%
  \IfBooleanTF{#1}
   {\mleft. #3 \mright|_{#4}}
   {#3#2|_{#4}}%
}

\begin{acronym}
\acro{PLS}{physical layer security}
\acro{OP}{outage probability}
\acro{SNR}{signal-to-noise-ratio}
\acrodefplural{SNR}[SNRs]{signal-to-noise ratios}
\acro{AWGN}{additive white Gaussian noise}
\acro{CSI}{channel state information}
\acro{PDF}{probability density function}
\acrodefplural{PDF}[PDFs]{probability density functions}
\acro{CDF}{cumulative distribution function}
\acro{PDF}{probability density function}
\acrodefplural{CDF}[CDFs]{cumulative distribution functions}
\acro{LRS}{large reflecting surface}
\acro{BS}{base station}
\acro{RF}{radio frequency}
\acro{RV}{random variable}
\acro{FN}{folded normal}
\acro{TDD}{time-division duplexing}
\acro{LOS}{line-of-sight}
\acro{DL}{downlink}
\acro{MRT}{maximal ratio transmission}
\acro{MRC}{maximal ratio combining}
\acro{RIS}{reconfigurable intelligent surface}
\acro{MC}{Monte Carlo}
\acro{RV}{random variable}
\acro{CDF}{cumulative density function}
\acro{FAS}{fluid antenna system}
\acro{MIMO}{multipe-input multiple-output}
\acro{SOP}{secrecy outage probability}
\acro{BS}{base station}
\acro{NOMA}{non orthogonal multiple access}
\acro{mmWave}{millimeter wave}
\acro{BER}{bit error rate}
\acro{CSI}{channel state information}
\acro{UE}{user equipment}
\end{acronym}

\newtheorem{proposition}{Proposition}

\newtheorem{remark}{Remark}

\usepackage[numbers,sort&compress]{natbib}

\makeatletter
\def\blfootnote{\xdef\@thefnmark{}\@footnotetext}
\makeatother

\hyphenation{a-na-ly-ses dif-fe-rent cons-tella-tion}
%

\usepackage{float}
\usepackage{stfloats}
\usepackage{textcomp}
\usepackage[ruled,vlined]{algorithm2e}

\begin{document}
\title{\huge{Reliable and Secure Communications Through Compact Ultra-Massive Antenna Arrays}}
\author{Jos\'e~David~Vega-S\'anchez, \textit{Member, IEEE}, Henry Ramiro Carvajal Mora, \textit{Senior Member, IEEE}, Nathaly Ver\'onica Orozco Garz\'on, \textit{Senior Member, IEEE}, and F.~J.~L\'opez-Mart\'inez, \textit{Senior Member, IEEE} }

\maketitle


\blfootnote{\noindent 
J.~D.~Vega-S\'anchez is with the Colegio de Ciencias e Ingenier\'ias  ``El Polit\'ecnico", Universidad San Francisco de Quito (USFQ), Diego de Robles S/N, Quito 170157. (e-mail: dvega@usfq.edu.ec).
}

\blfootnote{\noindent
H.R.C. Mora, and N.V.O. Garz\'on are with the Faculty of Engineering and Applied Sciences, Telecommunications Engineering, Universidad de Las Américas (UDLA), 170503 Quito, Ecuador, e-mails: $\{\text{henry.carvajal,nathaly.orozco}\}$@udla.edu.ec.
}

\blfootnote{\noindent
F.~J.~L\'opez-Mart\'inez is with the Department of Signal Theory, Networking
and Communications, Research Centre for Information and Communication
Technologies (CITIC-UGR), University of Granada, 18071, Granada (Spain). (e-mail: fjlm@ugr.es). 
}

\blfootnote{\noindent
The work of H.R.C. Mora and N.V.O. Garzón was supported by Universidad de Las Américas, under research project ERT.HCM.23.13.01. This work is supported by grant PID2023-149975OB-I00 (COSTUME) funded by MICIU/AEI/10.13039/501100011033.
}
\blfootnote{\noindent Corresponding author: Henry Ramiro Carvajal Mora (e-mail: henry.carvajal@udla.edu.ec).}

\begin{abstract}
Compact Ultramassive Antenna Array (CUMA) is a pioneering paradigm that leverages the flexibility of the Fluid Antenna System (FAS) to enable a simple multiple access scheme for massive connectivity without the need for precoding, power control at the base station or interference mitigation in each user's equipment. In order to overcome the mathematical intricacy required to analyze their performance, we use an asymptotic matching approach to relax such complexity with a remarkable accuracy. First, we analyze the performance of the CUMA network in terms of the outage probability (OP) and the ergodic rate (ER), deriving simple and highly accurate closed-form approximations to the channel statistics. Then, we evaluate the potential of the CUMA scheme to provide secure multi-user communications from a physical layer security perspective. Leveraging a tight approximation to the signal-to-interference-ratio (SIR) distribution, we derive closed-form expressions for the secrecy outage probability (SOP). We observe that the baseline CUMA (without side information processing) exhibits limited performance when eavesdroppers are equipped with a CUMA of the same type. To improve their secure performance, we suggest that a simple imperfect interference cancellation mechanism at the legitimate receiver may substantially increase the secrecy performance. Monte Carlo simulations validate our approximations and demonstrate their accuracy under different CUMA-based scenarios.
\end{abstract}

\begin{IEEEkeywords}
Compact ultramassive antenna array, fluid antenna system, outage probability, physical layer security.
\end{IEEEkeywords}

\section{Introduction}
The forthcoming sixth-generation (6G) network is expected to support enormously massive connectivity of devices in extreme-density and diverse mobility scenarios. To achieve this goal, a simple multiple access technology that enables a massive number of connected user equipments (UEs) at the same time-frequency is of paramount importance. In this context, non-orthogonal multiple access (NOMA) and rate splitting multiple access (RSMA) have emerged as aggressive approaches to massive connectivity for multi-user signal handling~\cite{Bruno}. However, just like massive \ac{MIMO} in fifth-generation (5G), RSMA and NOMA require channel state information at the \ac{BS} for precoding and power control optimization, as well as for the UE to perform complex interference cancellation techniques \cite{nomaa}. To circumvent this issue, \ac{FAS} has been envisioned as a flexible structure to enable position-switchable antennas, allowing upscaling connectivity in a simpler and more scalable way~\cite{TongNewPAra}. Specifically, FAS uses flexible and reconfigurable antennas (e.g., liquid metals, or pixel-like switches) to {create} a software-controllable structure equipped with positions (also called ``ports") that can be adaptively modified in space at the UE side for exploiting its dynamic position over the available space at the UE to pick up the strongest or least-interfered received signal~\cite{Ref2}. 

The promising usefulness of FAS opens up many use cases for beyond fifth-generation (B5G) networks: $i)$ FAS as an approximation to continuous-aperture \ac{MIMO} (CP-MIMO), $ii)$ fluid antenna multiple access (FAMA) for massive connectivity, $iii)$ FAS for wireless power transfer systems, and $iv)$ FAS for \ac{PLS}, among others~\cite{ByoungRO}. For multi-user communications, fluid antenna multiple access (FAMA) facilitates a simple multiple-access method for massive connectivity. Specifically, UEs occupy the same time-frequency channel without precoding and power control at the BS or interference mitigation techniques at the UE side. Besides, each UE uses the fluid antenna to switch to the port with the strongest signal-to-interference ratio SIR, so FAMA does not require coordination between the BS and the UEs~\cite{AccessFAS}. Following this, 
in~\cite{WongfFAMA}, Wong et al. unleashed a radical approach to massive connectivity of tens or even hundreds of UEs, referred to as fast FAMA ($f$-FAMA). Here, a single fluid antenna at each UE exploits the interference null, produced inherently by multipath propagation for multiple access. However, the main challenge of $f$-FAMA is that each UE is required to identify the best port (i.e., the port with the lowest SIR) on its own by switching on a symbol-by-symbol, which is quite complicated to reach. Aiming to overcome the practical constraints of $f$-FAMA, in~\cite{WongsFAMA}, Wong et al. introduced a more realistic multiple access procedure referred to as slow FAMA ($s$-FAMA), in which the UE updates its best port only in instances where the channels' envelopes change. Therein, the interference immunity of $s$-FAMA was explored by analyzing the outage probability. Based on~\cite{WongsFAMA}, in~\cite{sFAMArefXL}, Wong et al. investigated the multiplexing gain performance of $s$-FAMA in the millimeter-wave band by proposing a channel model that can characterize the presence of directional line-of-sight (LoS) and non-LoS paths via computer simulations. In~\cite{NoorsFAMA}, Waqar et al. proposed a low-complexity port selection method for $s$-FAMA so that by employing deep learning, only a small number of correlated SINR observations are required to reach maximal multiplexing gain performance.

 Keeping the above in mind, very recently, in~\cite{WongCuma}, Wong et al. proposed an enhanced version of $s$-FAMA technology by introducing more degrees of freedom to the port selection, i.e., the UE activates many ports for the reception. Specifically, the activated ports are selected to guarantee that the in-phase and quadrature components of the desired signal at the ports are added constructively while the interference signals superimpose randomly. This scheme is called compact ultra-massive antenna array (CUMA)\footnote{The term ``ultra massive antenna array" refers to the enormous number of ports on the UE rather than the antenna port size. Interested readers can refer to~\cite{Ref3} for a comprehensive review of how FAS works and its possible use cases, including FAS-enabled massive connectivity. Specifically,~\cite[Sec. III-C]{WongCuma} explains in detail the multi-port activation scheme for determining the set of ports for each UE in CUMA.}, which can be implemented by deploying a FAS at the UE. While exhibiting promising features to enable multi-user communications beyond $s$-FAMA, its inherent mathematical complexity poses some challenges to the performance analysis of CUMA networks.
 
  From a PLS perspective, the use cases of FAS mostly focus a single-user scenario. For instance, Tang et al. proposed a PLS scheme for FAS where the BS uses one antenna to transmit information and another to send artificial noise to disrupt potential eavesdroppers~\cite{PLS1}. Since the UE is equipped with a FAS, interference can be avoided by activating the proper port. Besides, in~\cite{PLS2Far}, Ghadi et al. investigated the secrecy performance via the copula method by assuming a FAS-aided communication. Likewise, Sánchez et al., in \cite{Sanchez}, explored the secrecy performance of a FAS setup undergoing Nakagami-$m$ fading channels. Here, two different implementations of FAS were considered: $i)$ a non-diversity FAS and maximum-gain combining-FAS (MGC-FAS) diversity scheme at the legitimate UE node and $ii)$ multiple antennas performing maximal ratio combining (MRC) at the eavesdropper. However, the question of whether FAS can provide secure communications in combination with its ability to serve multiple users remains unanswered to date.

 Based on the above and encouraged by the potential of CUMA to provide a simple multiple-access scheme, we exploit the benefits of asymptotic techniques to approximate CUMA's statistics through a simple approach without invoking complicated single/double-fold integrals, like in previous works \cite{WongCuma}. Then, with these results at hand, we provide analytically tractable expressions for several performance metrics, e.g., outage probability (OP) and the ergodic rate (ER) of the conventional CUMA network in terms of well-known functions available in popular computing software packages. Then, aiming to discern whether CUMA can provide additional gains in terms of PLS for forthcoming wireless networks, we explore the secrecy performance of the CUMA-aided secure communications for the first time in the literature. Again, this is accomplished in a simple way without resorting to cumbersome methods commonly used when dealing with FAS-enabled communications. The secrecy performance of a baseline CUMA scheme is evaluated, and compared to a simple but more advanced scheme that implements partial interference cancellation (IC) for CUMA. We observe that the use of a simple imperfect IC scheme can improve the security of CUMA networks. 
 In summary, the main contributions of this work are enumerated as follows: 
\begin{figure*}[ht!]
\psfrag{B}[Bc][Bc][0.7]{$\mathrm{UE~2_\mathrm{B}}$}
\psfrag{A}[Bc][Bc][0.7]{$\mathrm{UE}~1_\mathrm{B}$}
\psfrag{C}[Bc][Bc][0.7]{$\mathrm{UE}~U_\mathrm{B}$}
\psfrag{L}[Bc][Bc][0.7]{$\mathrm{Base~Station}$}
\psfrag{K}[Bc][Bc][0.7]{$N_2~\mathrm{ports}$}
\psfrag{J}[Bc][Bc][0.7]{$\lambda W_2~\mathrm{length}$}
\psfrag{P}[Bc][Bc][0.7][90]{$N_1~\mathrm{ports}$}
\psfrag{H}[Bc][Bc][0.7][90]{$\lambda W_1~\mathrm{length}$}
\psfrag{Z}[Bc][Bc][0.7]{$\mathrm{Combination}$~$\mathrm{activated}$}
\psfrag{R}[Bc][Bc][0.7]{$\mathrm{port}$}
    \subfigure[] {\includegraphics[width=0.40\textwidth]{./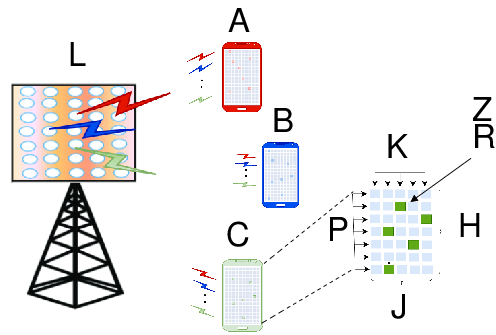}} 
\psfrag{B}[Bc][Bc][0.7]{$\mathrm{UE~1_\mathrm{E}}$}
\psfrag{Q}[Bc][Bc][0.7]{$\mathrm{UE}~U_\mathrm{E}$}
\psfrag{A}[Bc][Bc][0.7]{$\mathrm{UE}~1_\mathrm{B}$}
\psfrag{C}[Bc][Bc][0.7]{$\mathrm{UE}~U_\mathrm{B}$}
\psfrag{L}[Bc][Bc][0.7]{$\mathrm{Base~Station}$}
\psfrag{K}[Bc][Bc][0.7]{$N_2~\mathrm{ports}$}
\psfrag{J}[Bc][Bc][0.7]{$\lambda W_2~\mathrm{length}$}
\psfrag{P}[Bc][Bc][0.7][90]{$N_1~\mathrm{ports}$}
\psfrag{H}[Bc][Bc][0.7][90]{$\lambda W_1~\mathrm{length}$}
\psfrag{Z}[Bc][Bc][0.7]{$\mathrm{activated}$}
\psfrag{R}[Bc][Bc][0.7]{$\mathrm{port}$}
  \hspace{15mm}
    \subfigure[]
    {\includegraphics[width=0.40\textwidth]{./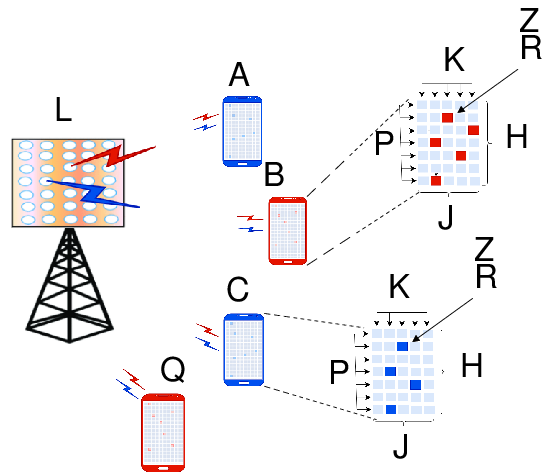}}
    \caption{(a) Conventional CUMA downlink system with a MIMO BS communicating with the $U_\mathrm{B}$ FAS-enabled UEs, (b) a CUMA-aided secure scheme with a MIMO BS communicating with  $U_\mathrm{B}$ FAS-enabled legitimate UEs in the presence of $U_\mathrm{E}$ FAS-enabled eavesdropper UEs. }
    \label{SM}
    \vspace{-4mm}
\end{figure*}
\begin{itemize}
\item We provide a closed-form approximation to the \ac{PDF} of the SIR for the in-phase channel component associated with a typical UE of the CUMA scheme. It is worthwhile noting that the proposed \ac{PDF} enables for simple performance examination of CUMA applications, which involves both the \ac{PDF} and the \ac{CDF} distributions, as well as secrecy/outage- and ergodic-based metrics. Furthermore, unlike state-of-the-art CUMA-related works, our closed-form expressions can be calculated without incurring intricate special functions or complex single/multi-fold integrals, thus demanding negligible computational cost. 

\item Based on the \ac{PDF} of the SIR for the in-phase channel component, a simple and highly accurate closed-form approximation of the \ac{PDF} and \ac{CDF} for the overall $ \mathrm{SIR}$ of the CUMA scheme is derived by taking into consideration both in-phase and quadrature components of the desired signal. Additionally, these analytical approximations can be quickly calculated employing standard software packages, as they do not involve complex functions or multi-fold integrals. 

\item 
A closed-form lower bound expression for the \ac{SOP} in the proposed CUMA-based scheme is derived by assuming a practical setup where the \ac{CSI} of the eavesdropper is not available at the \ac{BS}. It is worth emphasizing that our approximations help design secure systems from a CUMA perspective since they enable us to comprehend how the secrecy performance changes by varying the CUMA parameters. 

\item  Based on the derived formulations, valuable insights are provided on the impact of the system parameters (e.g., port density, the number of users, and the antenna size of the \ac{FAS}) over the secrecy performance.
 \end{itemize}


\section{SYSTEM MODEL AND DEFINITIONS}
Here, we assume two CUMA-based scenarios, namely, $i)$ the conventional CUMA network: a downlink setup consisting of a MIMO BS with $N_t$ antennas communicating only with legitimate $U_\mathrm{B}$ CUMA UEs, as shown in Fig. \ref{SM}a, and $ii)$ the CUMA-aided secure network: a downlink setup with a MIMO BS communicating with  $U_\mathrm{B}$ FAS-enabled legitimate UEs in the presence of $U_\mathrm{E}$ FAS-enabled eavesdropper UEs, as depicted in Fig. \ref{SM}b. For the sake of compactness, and given that the conventional CUMA scenario can be seen as a special case of the CUMA-aided secure scheme on which no eavesdroppers are present, the signal model and subsequent mathematical derivations will be developed having the latter case in mind. Hence, in the sequel we will consider a downlink wiretap network with a BS named as Alice (A) equipped with $N_t$ antennas, aiming to serve $U_\mathrm{B}=U_\mathrm{E}=U$ UEs (with $N_t\geq U$). The BS uses random precoding \cite{WongCuma} to send a confidential message to each legitimate CUMA UE Bob (B), denoted by $u_{\mathrm{B},u}$, while one external eavesdropper Eve (E) also equipped with a CUMA, denoted by $u_{\mathrm{E},u}$, tries to decode the information transmitted to each legitimate $u-$th UE from its received signal (i.e., $u \in \left \{1,\cdots,U  \right \}$), as shown in Fig. \ref{SM}b. 

Unlike CUMA-aided secure scheme, for the conventional CUMA scenario, eavesdroppers, i.e., $u_{\mathrm{E},u}$ UEs are not considered, but only legitimate $u_{\mathrm{B},u}$ UEs. By having CUMA technology on the receiving nodes, communication occurs in the same time-frequency resource, and the antennas at A are far enough apart to be considered spatially independent. \textcolor{black}{As we will later see, in an interference-limited CUMA scenario, all UEs are independent and identically distributed\footnote{We would like to highlight that this is not an assumption, but a direct consequence of the CUMA operation as the desired and interfering signals have the same average power before port selection is carried out.} (i.i.d.), which is enough to focus on the performance of any typical UE in the CUMA network}. Hence, the sub-index $u$ is dropped in the sequel for the sake of notational simplicity. UEs $u_\mathrm{B}$ and Eves $u_\mathrm{E}$ are equipped with two-dimensional (2D) FAS moving freely along $N=N_1\times N_2$ positions (ports) equally distributed on the physical space $W=W_1\lambda\times W_2\lambda$ (see Fig. \ref{SM}), where $\lambda$ is the wavelength. 

In this setup, the received signals at $u_i$ with $i \in \left \{ \mathrm{B}, \mathrm{E} \right \}$ in a vector form (i.e., at each FA port) are given by 
\cite{WongCuma}
\begin{align}\label{cumaeq1}
{\bf{r}}^{(u_i)}= {\bf{g}}^{(u_\mathrm{B},u_i)}s_{u_\mathrm{B}}+\sum_{\substack{\widetilde{u}_\mathrm{B}=1\\ \widetilde{u}_\mathrm{B}\neq u_\mathrm{B}}}^{U}{\bf{g}}^{(\widetilde{u}_\mathrm{B},u_i)}s_{\widetilde{u}_\mathrm{B}}+{\bf{\eta}}^{(u_i)},  
\end{align}
where $s_{u_\mathrm{B}}$ is the transmitted symbol for UE $u_\mathrm{B}$, $s_{\widetilde{u}_\mathrm{B}}$ is the interfering symbol produced by the transmission towards the legitimate UE $\widetilde{u}_\mathrm{B}$ in the same time-frequency resource\footnote{For simplicity, we assume that all symbols have the same power $\sigma_s^2$.},  ${\bf{\eta}}^{(u_i)}$ is the complex additive white Gaussian noise (AWGN) vector at the ports for UE $u_i$, and ${\bf{g}}^{(u_\mathrm{B},u_i)} \stackrel{\Delta}{=}{\bf{H}}_{u_i}{\bf{b}}_{u_\mathrm{B}}$ denotes the effective channel vector between the BS and the ports of UE $u_i$. Matrix ${\bf{H}}_{u_i}\in \mathbb{C}^{N\times N_t}$ denotes the complex channel between the multiantenna
BS and the FA ports of UE $u_i$, \textcolor{black}{which can also include the effects of mutual coupling between activated ports as described in \cite[Sec. III-E]{WongCuma};} ${\bf{b}}_{u_\mathrm{B}}\in\mathbb{C}^{N_t}$ denotes a random beamforming vector as defined in \cite{WongCuma} used to transmit the symbol $s_{u_\mathrm{B}}$ along the $N_t$ transmit antennas, agnostic to channel state information (CSI). Also, ${\bf{g}}^{(\widetilde{u}_\mathrm{B},u_i)} \stackrel{\Delta}{=}{\bf{H}}_{u_i}{\bf{b}}_{\widetilde{u}_\mathrm{B}}$ is the channel vector associated to the random beamforming vector towards $\widetilde{u}_\mathrm{B}$ observed at the ports of $u_i$.

The main idea of CUMA is that each FAS-enabled UE $u_i$ can activate a certain number of ports out of the available $N$, which are stored in two sets $\mathcal{K}_I$ and $\mathcal{K}_Q$ of size $|\mathcal{K}|=|\mathcal{K}_I|+|\mathcal{K}_Q|$ to maximize the in-phase ($\mathrm{I}$) and quadrature ($\mathrm{Q}$) channel components for signal reception. Specifically, the doubling-down CUMA version \cite[Sec. III-D]{WongCuma} ensures that the resulting signal in the CUMA UE  $u_i$ is aligned both in-phase and quadrature to subsequently aggregate them for the detection method (e.g., matrix inverse approach). Therefore, the resulting signal at UE $u_i$ is $r^{u_i}=r^{u_i}_\mathrm{I}+r^{u_i}_\mathrm{Q}$, in which \cite{WongCuma} 
\begin{align}\label{cumaeq2}
    r^{u_i}_\mathrm{I}=\sum_{j_i\in \mathcal{K}_I}^{}\mathrm{real}\left (r_{j_i}^{(u_i)}  \right ), \hspace{2mm}
r^{u_i}_\mathrm{Q}=\sum_{k_i\in \mathcal{K}_Q}^{}\mathrm{imag}\left (r_{k_i}^{(u_i)}  \right ),
\end{align}
with $r_j^{(u_i)}\stackrel{\Delta}{=}\left [ {\bf{r}}^{u_i} \right ]_j$ denoting the selection of a set of indices $j$ over the entire signal in \eqref{cumaeq1}. From the above definitions, the signal-to-interference-plus-noise ratio ($\mathrm{SINR}$) for UE $u_i$ based on $r^{u_i}_\mathrm{I}$ is expressed as 
\cite{WongCuma}
\begin{align}\label{cumaeq3}
\mathrm{SINR}_{\mathrm{I}}^{u_i}=\frac{P L_{u}\sigma^2_{s} \nu_I^{u_i} }{\delta_i P L_{ u}\sigma^2_{s}\xi^{u_\mathrm{B}}+\frac{\overline{N}\sigma^2_{\eta_{u_i}}}{2}}\stackrel{(a)}\approx\frac{\nu_I^{u_i}}{\delta_i\xi^{u_i}},
\end{align}
where $P$ is the transmit power, $L_{ u}$ are the propagation path-loss associated to user $u$, $\sigma^2_{s}$ is the average symbol power, $\sigma^2_{\eta_{u_i}}$ is the total AWGN power noise at the UE $u_i$, and $\overline{N}=\left | \mathcal{K}_I \right |$. Note that in step $(a)$, the noise power is dropped 
because the system is operating in an interference-limited scenario, i.e., $\mathcal{I}\stackrel{\Delta}{=} U-1 \gg 1$ with $\mathcal{I}$ being the number of interfering users. 
 The parameter $\delta_i\leq1$ is included to allow for the flexibility of incorporating a partial interference cancellation feature\footnote{To perform IC, the BS transmits known orthogonal sequences to UE $u_\mathrm{B}$ for estimating interfering channels using traditional methods.}. For now, let us neglect the role of $\delta_i$, i.e. consider it equals one for the traditional CUMA scenario\footnote{As will see later, this term plays a pivotal role when dealing with CUMA-aided secure communications.}. With these assumptions, that lead to the baseline CUMA scheme in \cite{WongCuma}, and focusing on the reception of the I component (steps to derive the Q component are omitted for conciseness, but are formally equivalent), the signal-to-interference ratio ($\mathrm{SIR}$) can be expressed as $\mathrm{SIR}_{\mathrm{I}}^{u_i}=\tfrac{\nu_I^{u_i}}{\delta_i\xi^{u_i}}$, in which 
\begin{align}\label{cumaeq4}
\nu_I^{u_i}=&\left (  \sum_{k=1}^{\overline{N}} \mathrm{max}\left \{ 0,\mathrm{real}\left ( \left [ {\bf{g}}^{{(u_\mathrm{B},u_i)}}_k \right ]  \right ) \right \} \right )^2 \nonumber \\  \xi^{u_i}=&\sum_{\widetilde{u}_\mathrm{B}=1}^{\mathcal{I}}\sum_{k=1}^{\overline{N}}t_k \mathrm{real}\left ( \left [ {\bf{g}}_{k}^{(\widetilde{u}_\mathrm{B},u_i)} \right ]  \right ),
\end{align}
where $t_k \in \left \{ 0,1 \right \}$ is an i.i.d. Bernoulli random variable with equal probability for modeling the on-off features of a given port $k$. While ${\bf{g}}^{u_\mathrm{B},u_i}_k$ is assumed to be known (estimated) at $u_i$, ${\bf{g}}_{k}^{(\widetilde{u}_\mathrm{B},u_\mathrm{B})}$ does not need to be known a priori at the receiver sides. Inspection of \eqref{cumaeq3} and \eqref{cumaeq4} reveals that the effects of path loss and transmit power are canceled out in the interference-limited regime; hence, the SIR gain \textit{only} comes from the port selection strategy that combines coherently the desired signal and randomly the interference, and from the choice of port density. However, we note that when Bob and Eve are equipped with a CUMA of the same kind, their corresponding SIRs become i.i.d. regardless of the distance and transmit power, and no additional gain can be obtained from a PLS perspective. This conflicts with CUMA's original conception, where CSI-based precoding and power control at the BS are unneeded, as well as estimating the interference channels corresponding to the $\widetilde{u}_\mathrm{B}$ symbols.

To achieve a better understanding of the potential of CUMA scheme for PLS, we evaluate the performance of a modified receiver with partial interference cancellation capabilities; these are captured by \textcolor{black}{introducing} the parameter $\delta_i\in(0,1]$ \textcolor{black}{in the $\mathrm{SIR}_{\mathrm{I}}^{u_i}$} as anticipated previously. Based on the above intuitions, the PDF of $\mathrm{SIR}_{\mathrm{I}}^{u_i}$, denoted by $Z_{I_i}$, after some mathematical manipulations, can be formulated as \cite[eq. (35)]{WongCuma}
\begin{align}\label{cumaeq5}
&f_{Z_{I_i}}(z)=\frac{\left ( \delta_i\sigma_{2,i}^2 \right )^{\frac{1}{4}}\Gamma\left ( \frac{I+1}{2} \right )z^{-\frac{3}{4}}}{\left ( \Gamma\left ( \frac{\mathcal{I}}{2}  \right ) \right )^2 2^{\frac{\mathcal{I}}{2}}\mu_i^{\frac{1}{2}}} \nonumber\\
&\hspace{0.2cm}\times \exp\left(-\frac{\mu_i^2}{4\sigma_{1,i}^2 }\left ( \frac{2\sigma_{1,i}^2+\delta_i\sigma_{2,i}^2z}{\sigma_{1,i}^2+\delta_i\sigma_{2,i}^2z} \right )\right)  \left ( \frac{2}{1+\frac{\delta_i\sigma_{2,i}^2z}{\sigma_{1,i}^2}} \right )^{\frac{2\mathcal{I}+1}{4}} \nonumber \\
&\hspace{0.2cm}\times\mathcal{M}_{-\frac{2\mathcal{I}+1}{4},-\frac{1}{4}}\left ( \frac{\mu_i^2 \delta_i\sigma_{2,i}^2z }{2\sigma_{1,i}^2 \left ( \sigma_{1,i}^2+\delta_i\sigma_{2,i}^2z \right )
} \right ),
\end{align}
where $\mu_i= \tfrac{\overline{N}}{2}\sqrt{\tfrac{\Omega_i}{\pi}}$,  $\Gamma(\cdot)$ is the gamma function, $\Omega_i$ denotes the average channel power, $\mathcal{M}_{a,b}(t)$ is the Whittaker $M$ function
given by~\cite[Sec. 9.220]{Gradshteyn}. Also,~\cite[Eq.~(33)]{WongCuma}
\begin{align}\label{cumaeq12} \sigma_{2,i}^2=\frac{\Omega_i}{4}\left ( \overline{N}+\sum_{m=2}^{\overline{N}}\sum_{k=1}^{m-1}\rho_{k,m} \right ),
\end{align}
and

\begin{align}\label{cumaeq6}
    \sigma_{1,i}^2=\frac{\overline{N}\Omega_i}{4}\left (1-\frac{1}{\pi}  \right )+2\sum_{m=2}^{\overline{N}}\sum_{k=1}^{m-1}\text{cov}_i\left ( X_k,X_m \right ),
\end{align}

\begin{align}\label{cumaeq7}
    \text{cov}_i\left ( X_k,X_m\right )=&\frac{\left ( 1-\rho_{k,m}^2 \right )^{\frac{3}{2}}\Omega_i}{4\pi}-\frac{\Omega_i}{4\pi}+\frac{\rho_{k,m}}{2\sqrt{\pi\Omega_i}}
    \nonumber \\ \times &
    \mathcal{W}\left ( -\sqrt{\frac{2}{1-\rho_{k,m}^2}}\frac{\rho_{k,m}}{\sqrt{\Omega_i}} ,\frac{1}{\Omega_i},\frac{1}{2}\right ),
\end{align}
where 
\begin{align}\label{cumaeq8}
\mathcal{W}(a,b,c)=-\frac{a\Gamma\left ( d_{c} \right )}{\sqrt{2\pi}b^{d_{c}}}{}_2F_1\left ( \frac{1}{2},d_{c};\frac{3}{2};-\frac{a^2}{2b}\right )+\frac{\Gamma\left ( c+1 \right )}{2b^{c+1}},
\end{align}
where $d_{c}=(2c+3)/2$, ${}_2F_1(\cdot,\cdot;\cdot;\cdot)$ is the Gauss hypergeometric function~\cite[Eq.~(9.111)]{Gradshteyn}, and
\begin{align}\label{cumaeq9}
\rho_{k,m}=j_0\left ( 2\pi \sqrt{\left (  \frac{\left ( n_1-n_3 \right )W_1}{N_1-1}\right )^2\hspace{-0.05cm}+\hspace{-0.05cm}\left ( \frac{\left ( n_2-n_4 \right )W_2}{N_2-1} \right )^2} \right ),
\end{align}
is the correlation coefficient, where $j_0(x)=\sin(x)/x$, and the $(n_1,n_2)$-th and the $(n_3,n_4)$-th ports are adequately mapped from the $k$-th and the $m$-th ones as $k=\text{map}(n_1,n_2)$, and $m=\text{map}(n_3,n_4)$, respectively.  In this way, the $(n_1,n_2)$-th port\footnote{Likewise, the $(n_3,n_4)$-th port is calculated from \eqref{eq6cuma} by changing $k$ and $N_1$ for $m$ and $N_2$, respectively.} is computed as 
\begin{align}\label{eq6cuma}
n_1 = \begin{dcases*} N_1, &\hspace{-1mm} if $k\hspace{0.5mm}\text{mod}\hspace{0.5mm}N_1=0$\\ k\hspace{0.5mm}\text{mod}\hspace{0.5mm}N_1, &\hspace{-1mm} otherwise\end{dcases*}\\
n_2 = \begin{dcases*} \left \lfloor \frac{k}{N_1}\right \rfloor,
&\hspace{-1mm} if $n_1=N_1$ \\\left \lfloor \frac{k}{N_1}\right \rfloor
+1, &\hspace{-1mm} o.w., \end{dcases*}
    \end{align}
where $\left \lfloor \cdot \right \rfloor
$ is the floor function. Finally, the PDF of the $\mathrm{SIR}$ of a CUMA UE $u_i$ combines the in-phase and the quadrature component channels, i.e., $\mathrm{SIR}^{u_i}=  \mathrm{SIR}_{\mathrm{I}}^{u_i} +\mathrm{SIR}_{\mathrm{Q}}^{u_i}$. So, mathematically speaking, the PDF of $\mathrm{SIR}^{u_i}$, denoted by $Z_i$, is given by~\cite[Eq.~(38)]{WongCuma}
\begin{align}\label{cumaeq11}
f_{Z_i}(z_i)=\int_{0}^{z_i}f_{Z_{I_i}}(x)f_{Z_{Q_i}}(z_i-x)dx,
\end{align}
in which the random variables ${Z_{I_i}}$ and ${Z_{Q_i}}$ are i.i.d. and each has the PDF given in \eqref{cumaeq5}.
\section{PERFORMANCE ANALYSIS}
In this section, we provide the exact and approximate metrics for the performance evaluation of both the conventional CUMA and the CUMA-aided secure communications case.
\subsection{Conventional CUMA scheme}
Here, we propose an approximate expression for the PDF in \eqref{cumaeq11}. This will be later used to evaluate classical benchmark performance metrics such as the OP and the ER for the conventional CUMA case, for which their exact definitions are given in the sequel.
\subsubsection{Exact Metrics}
\textbf{ER:} From~\cite[Eq.~(43)]{WongCuma}, the ER is defined as the achievable transmission rate per unit bandwidth {for} all i.i.d. UEs $u_\mathrm{B}$. Mathematically speaking, the exact ER is given by 
\begin{align}\label{eq8cuma}
C=U\int_{0}^{\infty}\log_2\left ( 1+\frac{z_\mathrm{B}}{\sigma_{2,\mathrm{B}}^2} \right )f_{Z_\mathrm{B}}(z_\mathrm{B})dz_\mathrm{B} \hspace{2mm}   \text{(bits/channel-use)}.    
\end{align}
\textbf{OP:} This metric is defined as the probability that the channel capacity falls below a predefined threshold rate $\gamma_{th}$. Therefore, from~\cite{WongCuma},
the exact OP of any typical UE $u_\mathrm{B}$ in the conventional CUMA network can be defined as 
\begin{align}
\label{eq10cuma}
P_{\mathrm{out}}& =F_{Z_\mathrm{B}}\left ( z_{th}\right ), 
\end{align}
in which $z_{th}=\left ( 2^{\gamma_{th} } -1\right ) \sigma_{2,\mathrm{B}}^2$, and $F_{Z_\mathrm{B}}(\cdot)$ denotes \ac{CDF} obtained from the PDF defined in~\eqref{cumaeq11}. Here, it is worth highlighting that the exact solutions of ER and OP in~\eqref{eq8cuma} and \eqref{eq10cuma} are expressed in the form of double-fold integrals involving special functions. To circumvent the referred limitation, a simple yet accurate closed-form approximation is proposed for~\eqref{cumaeq11}. This result aims to provide a more straightforward method for calculating the OP and ER expressions.

\subsubsection{Proposed Approximations}
In this section, we begin by approximating the PDF in~\eqref{cumaeq5} by using the novel asymptotic matching method (AoM\footnote{{The main idea behind AoM relies on the fact that parameters of the approximate distribution (e.g., Gamma in this case) are calibrated so as to match its asymptotic behavior to that of the exact distribution of CUMA network. By doing so, the AoM renders a good fit in the left tail and the body of the exact distribution. Recall that the PDF's asymptote around zero governs the high-signal-to-noise-ratio performance of a communication system operating over a channel modeled by that PDF. Conversely, the left tail (non-asymptotic regime) of the PDF plays a minor role, if any. Interested readers can refer to~\cite{Perim} for more detailed information about the AoM approach.}}), as stated in the following proposition.
\begin{proposition}\label{Propos1}
The PDF of $\mathrm{SIR}_{\mathrm{I}}^{u_i}$ for the in-phase channel component given  in~\eqref{cumaeq5} is approximated by
\begin{align}\label{eq11cuma}
f_{Z_{I_i}}(z)&\approx\frac{z^{-\frac{1}{2}}
 }{{\sqrt{\pi \beta_{I_i}} } }\exp\left ( -\tfrac{z}{\beta_{I_i}
} \right ), \hspace{1mm}{z\geq0,}
\end{align}
where
\begin{align}
\label{eq12cuma} 
\beta_{I_i}=\left ( \frac{2\sigma_{1,i}\Gamma\left ( \frac{\mathcal{I}}{2} \right )\mathcal{I}^{-1}}{\sigma_{2,i} \sqrt{\delta_i\pi}\exp\left ( -\frac{ \mu_i^2}{2\sigma_{1,i}^2 } \right ) }\right )^2.
\end{align}
\end{proposition}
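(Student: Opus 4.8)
The plan is to run the asymptotic matching (AoM) recipe exactly as announced in the text: pin down a one‑parameter approximating family whose small‑argument order coincides with that of the exact PDF in~\eqref{cumaeq5}, and then fix its free parameter by forcing the leading coefficients of the two expansions around $z=0$ to agree. The candidate family is the right‑hand side of~\eqref{eq11cuma}, i.e.\ a Gamma density with shape $1/2$ and scale $\beta_{I_i}$; it is already normalized on $z\geq0$, so $\beta_{I_i}$ is the only quantity to be determined. The first thing to check, to justify the hard‑coded $z^{-1/2}$ in the template, is that $f_{Z_{I_i}}(z)=c\,z^{-1/2}\bigl(1+o(1)\bigr)$ as $z\to0^{+}$ for some constant $c>0$.

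To see this I would track the three $z$‑dependent factors of~\eqref{cumaeq5} besides the explicit $z^{-3/4}$: the exponential tends to $\exp\!\bigl(-\mu_i^2/(2\sigma_{1,i}^2)\bigr)$; the rational power tends to $2^{(2\mathcal{I}+1)/4}$; and the Whittaker factor has argument $t(z)=\mu_i^2\delta_i\sigma_{2,i}^2 z/\bigl(2\sigma_{1,i}^2(\sigma_{1,i}^2+\delta_i\sigma_{2,i}^2 z)\bigr)$, which vanishes linearly, $t(z)\sim \mu_i^2\delta_i\sigma_{2,i}^2 z/(2\sigma_{1,i}^4)$. Invoking the small‑argument behavior $\mathcal{M}_{a,b}(t)\sim t^{\,b+1/2}$ (here $b=-\tfrac14$, so $\mathcal{M}\sim t^{1/4}$), the Whittaker factor contributes $\propto z^{1/4}$, which combines with $z^{-3/4}$ to give precisely $z^{-1/2}$ — confirming the shape $1/2$ of the approximating Gamma.

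Next I would evaluate $c=\lim_{z\to0^{+}} z^{1/2} f_{Z_{I_i}}(z)$ by multiplying the algebraic prefactor of~\eqref{cumaeq5} by the three constant limits just identified together with the factor $\bigl(\mu_i^2\delta_i\sigma_{2,i}^2/(2\sigma_{1,i}^4)\bigr)^{1/4}$ coming from $t(z)^{1/4}$; the fractional powers of $2$ and of $\mu_i$ cancel, leaving a compact expression in $\sigma_{1,i},\sigma_{2,i},\mu_i,\delta_i,\mathcal{I}$. Matching against the corresponding limit of~\eqref{eq11cuma}, namely $\lim_{z\to0^{+}} z^{1/2} f_{Z_{I_i}}(z)=1/\sqrt{\pi\beta_{I_i}}$, gives $\beta_{I_i}=1/(\pi c^2)$; a final rearrangement, together with the recursion $\Gamma(x+1)=x\,\Gamma(x)$ applied at $x=\mathcal{I}/2$, brings it to the stated form~\eqref{eq12cuma}. (Matching the CDF asymptote instead of the PDF asymptote yields the same condition, so either route works.)

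I expect the main obstacle to be bookkeeping rather than difficulty: carefully propagating the quarter‑powers of $2$, of $\mu_i$ and of the $\sigma$‑parameters through the limit, and in particular using the Whittaker asymptotics with the correct exponent $b+\tfrac12$, is the step most prone to slips. The only genuinely conceptual point is that~\eqref{eq11cuma} is an \emph{approximation}, not an identity: its validity rests on the AoM rationale — the fit is concentrated in the body and left tail of the distribution, which dominate the outage and ergodic‑rate metrics, while the upper tail is deliberately left unmatched — so I would invoke \cite{Perim} for that justification rather than attempt an analytic error bound, deferring quantitative validation to the Monte Carlo comparisons.
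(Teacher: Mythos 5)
Your proposal follows essentially the same route as the paper's proof: both extract the $z\to0^{+}$ asymptote $f_{Z_{I_i}}(z)\simeq a_0\,z^{-1/2}$ of \eqref{cumaeq5} from the small-argument behavior of the Whittaker function (the paper reaches $\mathcal{M}_{a,b}(t)\simeq t^{\,b+1/2}$ via the ${}_1F_1$ representation with $e^{-t/2}\simeq1$ and ${}_1F_1\simeq1$, which is exactly your statement), and then determine $\beta_{I_i}$ by equating that coefficient with the $1/\sqrt{\pi\beta_{I_i}}$ of the normalized shape-$1/2$ Gamma density, using $\Gamma(x+1)=x\Gamma(x)$ to reach \eqref{eq12cuma}. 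The only difference is cosmetic: the paper additionally re-substitutes the fitted Gamma into an integral representation from \cite{WongCuma} before restating \eqref{eq11cuma}, a step your direct normalization argument makes unnecessary.
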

\begin{proof}
See Appendix.
\end{proof}
With~\eqref{eq11cuma} at hand, we then illustrate that the gamma distribution\footnote{From a practical viewpoint, this distribution has been widely employed to model the end-to-end channel due to its mathematical tractability, simplifying the derivation of performance metrics in intricate wireless networks.} is a highly accurate approximation for the PDF in~\eqref{cumaeq11}, which is given in the following proposition.
\begin{proposition}\label{Propos2}
The PDF and CDF of the $\mathrm{SIR}^{u_i}$ of the CUMA scheme can be approximated as
\begin{align}\label{eq13cuma}
f_{Z_i}(z_i)\approx &  \frac{1
 }{{\beta_{I_i}}}\exp\left ( -\frac{z_i}{\beta_{I_i}
} \right ), 
\end{align}
\begin{align}\label{eq13cumaCDF}
F_{Z_i}(z_i)\approx1-\Gamma\left (1, \frac{z_i}{\beta_{I_i}
} \right ),
\end{align}
where $\Gamma(\cdot,\cdot)$ is the upper incomplete gamma function~\cite[Eq.~(8.350.2)]{Gradshteyn}.
\end{proposition}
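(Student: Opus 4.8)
The plan is to start from Proposition~\ref{Propos1}, which establishes that each of the i.i.d.\ components $Z_{I_i}$ and $Z_{Q_i}$ has the approximate PDF $f(x)\approx (\pi\beta_{I_i})^{-1/2}x^{-1/2}e^{-x/\beta_{I_i}}$ for $x\ge 0$. The key observation is that this is exactly the PDF of a Gamma random variable with shape $1/2$ and scale $\beta_{I_i}$ (equivalently, $\tfrac{\beta_{I_i}}{2}$ times a chi-squared variable with one degree of freedom). Since $Z_i = Z_{I_i}+Z_{Q_i}$ is the sum of two independent such variables with a common scale, the convolution in~\eqref{cumaeq11} is the convolution of two Gamma$(1/2,\beta_{I_i})$ densities, which by the additivity of the shape parameter for Gamma variables of equal scale yields a Gamma$(1,\beta_{I_i})$ variable --- that is, an exponential with mean $\beta_{I_i}$. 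This immediately gives~\eqref{eq13cuma}.

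First I would make the Gamma identification explicit: write $f_{Z_{I_i}}(x)\approx \frac{1}{\Gamma(1/2)\beta_{I_i}^{1/2}}x^{1/2-1}e^{-x/\beta_{I_i}}$, noting $\Gamma(1/2)=\sqrt{\pi}$, so the approximating law is $\mathrm{Gamma}(1/2,\beta_{I_i})$. Then I would invoke the standard reproductive property of the Gamma family: if $X\sim\mathrm{Gamma}(a_1,\theta)$ and $Y\sim\mathrm{Gamma}(a_2,\theta)$ are independent, then $X+Y\sim\mathrm{Gamma}(a_1+a_2,\theta)$. Applying this with $a_1=a_2=1/2$ and $\theta=\beta_{I_i}$ gives $Z_i\sim\mathrm{Gamma}(1,\beta_{I_i})$, whose PDF is $\frac{1}{\beta_{I_i}}e^{-z_i/\beta_{I_i}}$, establishing~\eqref{eq13cuma}. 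Alternatively, and perhaps more transparently for the reader, I would perform the convolution directly: $f_{Z_i}(z_i)\approx\frac{1}{\pi\beta_{I_i}}e^{-z_i/\beta_{I_i}}\int_0^{z_i}x^{-1/2}(z_i-x)^{-1/2}dx$, and evaluate the Beta-type integral $\int_0^{z_i}x^{-1/2}(z_i-x)^{-1/2}dx = B(1/2,1/2) = \pi$ (e.g.\ via the substitution $x=z_i\sin^2\phi$), which cancels the $\pi$ and leaves the exponential.

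For the CDF in~\eqref{eq13cumaCDF}, I would integrate the exponential PDF: $F_{Z_i}(z_i)\approx 1-e^{-z_i/\beta_{I_i}}$, and then observe that $e^{-z_i/\beta_{I_i}} = \Gamma\!\left(1,\tfrac{z_i}{\beta_{I_i}}\right)$ by the definition of the upper incomplete gamma function $\Gamma(1,t)=\int_t^\infty e^{-\tau}d\tau = e^{-t}$, matching the stated form.

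There is no serious obstacle here: the proposition is essentially a corollary of Proposition~\ref{Propos1} combined with an elementary fact about Gamma convolutions. The only point requiring a word of care is that the approximation is applied \emph{before} convolving --- i.e.\ one replaces $f_{Z_{I_i}}$ by its Gamma$(1/2,\beta_{I_i})$ approximant and then convolves exactly, rather than approximating the exact convolution~\eqref{cumaeq11} directly; I would state this explicitly so the approximation step is not conflated with the (exact) convolution step. A brief remark noting that this ``$1/2+1/2=1$'' structure is precisely what makes the doubling-down CUMA SIR exponentially distributed --- and hence why the OP and ER admit the simple closed forms derived next --- would also be worth including.
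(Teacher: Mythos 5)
Your proof is correct, but it follows a genuinely different route from the paper's. The paper does not invoke the reproductive property of the Gamma family; instead it re-applies the asymptotic matching machinery a second time: it posits a fresh Gamma approximant $\widetilde{f}_{Z_i}(z_i)=\tfrac{1}{\Gamma(\alpha_i)\beta_i^{\alpha_i}}z_i^{\alpha_i-1}e^{-z_i/\beta_i}$, computes the near-zero asymptote $c_0 z_i^{d_0}$ of the convolution \eqref{cumaeq11} by substituting the shape/scale parameters of the two $\mathrm{Gamma}(1/2,\beta_{I_i})$ approximants from Proposition~\ref{Propos1} into a known formula for the asymptote of a sum of Gamma variates (\cite[Eq.~(23)]{Perim}), and then matches $c_0=\widetilde{c}_0$, $d_0=\widetilde{d}_0$ to obtain $\alpha_i=1$, $\beta_i=\beta_{I_i}$. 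Your argument --- identifying $f_{Z_{I_i}}$ as $\mathrm{Gamma}(1/2,\beta_{I_i})$ and using $\mathrm{Gamma}(1/2,\theta)+\mathrm{Gamma}(1/2,\theta)=\mathrm{Gamma}(1,\theta)$, or equivalently the Beta integral $B(1/2,1/2)=\pi$ --- reaches the same endpoint more directly, and it buys something the paper's proof does not make visible: given the Proposition~\ref{Propos1} approximants, the exponential form is the \emph{exact} convolution, so no further approximation error is introduced at this step (only the common-scale structure $\beta$ identical for the I and Q branches is needed, which holds since $Z_{I_i}$ and $Z_{Q_i}$ are i.i.d.). The paper's route, by contrast, only matches the left tail and is the one that would still apply if the two branches had different scales or if the convolution were not itself a Gamma; it is also methodologically consistent with the AoM framework used throughout. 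Your derivation of the CDF and the identity $\Gamma(1,t)=e^{-t}$ matches the paper's (which simply integrates \eqref{eq13cuma}). One presentational note: your closing caveat about approximating \emph{before} convolving is well taken, and is in fact the same order of operations the paper implicitly adopts, since it feeds the approximants' parameters (not the exact PDF \eqref{cumaeq5}) into the asymptote formula for \eqref{cumaeq11}.
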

\begin{proof}
See Appendix.
\end{proof}
Next, we obtain closed-form expressions for the OP and the EC for the conventional CUMA scheme {to subsequently explore} the impact of system parameters on the CUMA's performance. Thus, such approximated baseline metrics are given in the following propositions
\begin{proposition}\label{Propos3}
An approximate ER of the conventional CUMA network can be expressed as
\begin{align}\label{eq14cuma}
C\approx \frac{ \textit{U} \exp\left ( \frac{\sigma_{2,\mathrm{B}}^2}{\beta_{I_\mathrm{B}}} \right )\Gamma\left ( 0,  \frac{\sigma_{2,\mathrm{B}}^2}{\beta_{I_\mathrm{B}}}\right )}{\log_e(2)},
\end{align}
where $\log_e(\cdot)$ is the natural logarithm.
\end{proposition}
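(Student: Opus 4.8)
The plan is to substitute the (exponential) approximation of Proposition~\ref{Propos2} into the exact definition of the ergodic rate in~\eqref{eq8cuma} and then reduce the resulting integral to a tabulated form. Concretely, I would replace the exact density $f_{Z_\mathrm{B}}(\cdot)$ by the closed-form approximation in~\eqref{eq13cuma}, i.e. $f_{Z_\mathrm{B}}(z)\approx\beta_{I_\mathrm{B}}^{-1}\exp(-z/\beta_{I_\mathrm{B}})$, which is the density of an exponential random variable with mean $\beta_{I_\mathrm{B}}$ (equivalently, the gamma density of Proposition~\ref{Propos2} with unit shape parameter). After switching the logarithm to the natural base via $\log_2(x)=\log_e(x)/\log_e(2)$, the ergodic rate becomes
\begin{align}
C\approx\frac{U}{\beta_{I_\mathrm{B}}\log_e(2)}\int_{0}^{\infty}\log_e\left(1+\frac{z}{\sigma_{2,\mathrm{B}}^2}\right)\exp\left(-\frac{z}{\beta_{I_\mathrm{B}}}\right)dz.
\end{align}

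The key step is then to evaluate this single integral in closed form. I would invoke the standard result $\int_{0}^{\infty}e^{-\mu x}\log_e(1+\nu x)\,dx=\mu^{-1}\exp(\mu/\nu)\,\Gamma(0,\mu/\nu)$ (see~\cite[Eq.~(4.337.2)]{Gradshteyn}, where it is usually written through the exponential integral $\mathrm{Ei}(\cdot)$ and converted using $-\mathrm{Ei}(-x)=\Gamma(0,x)$), with the identifications $\mu=1/\beta_{I_\mathrm{B}}$ and $\nu=1/\sigma_{2,\mathrm{B}}^2$. This makes the inner integral equal to $\beta_{I_\mathrm{B}}\exp(\sigma_{2,\mathrm{B}}^2/\beta_{I_\mathrm{B}})\,\Gamma(0,\sigma_{2,\mathrm{B}}^2/\beta_{I_\mathrm{B}})$; substituting back cancels the leading factor $\beta_{I_\mathrm{B}}$ and leaves exactly~\eqref{eq14cuma}. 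A one-line remark on convergence (the integrand decays exponentially while $\log_e(1+z/\sigma_{2,\mathrm{B}}^2)$ grows only sublinearly) justifies the use of the table integral.

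I do not anticipate a real obstacle: the argument is essentially a substitution followed by one known integral. The only points needing minor care are bookkeeping the sign and form of the exponential integral — ensuring the factor $\exp(\sigma_{2,\mathrm{B}}^2/\beta_{I_\mathrm{B}})$ carries a positive exponent and that $\Gamma(0,\cdot)$ is used consistently with the convention adopted earlier in the paper for the upper incomplete gamma function — and stating clearly that, since the step relies on~\eqref{eq13cuma}, the expression is an approximation inheriting the accuracy of Proposition~\ref{Propos2} rather than an exact identity.
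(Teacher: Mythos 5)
Your proposal is correct and follows essentially the same route as the paper: substitute the exponential approximation~\eqref{eq13cuma} into the exact ER definition~\eqref{eq8cuma} and evaluate the resulting single integral via~\cite[Eq.~(4.337.2)]{Gradshteyn}, yielding~\eqref{eq14cuma} after the factor $\beta_{I_\mathrm{B}}$ cancels. Your extra remarks on the $\mathrm{Ei}$-to-$\Gamma(0,\cdot)$ conversion and on convergence are sound but not needed beyond what the paper does.
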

\begin{proof}
See Appendix.
\end{proof}
\begin{proposition}\label{Propos4}
The OP expression of any typical UEs $u_\mathrm{B}$ of the conventional CUMA network can be approximated by
\begin{align}\label{eq15cuma}
P_{\mathrm{out}}(\gamma_{th})\approx  1-\Gamma\left ( 1,\frac{z_{th}}{\beta_{I_\mathrm{B}}} \right ).
\end{align}
\end{proposition}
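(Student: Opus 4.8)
The plan is to derive \eqref{eq15cuma} directly from the CDF approximation established in Proposition~\ref{Propos2}, together with the definition of outage given in~\eqref{eq10cuma}. First I would recall that the exact OP is $P_{\mathrm{out}} = F_{Z_\mathrm{B}}(z_{th})$ with $z_{th} = (2^{\gamma_{th}} - 1)\,\sigma_{2,\mathrm{B}}^2$, so the problem reduces to substituting the approximate CDF of $\mathrm{SIR}^{u_\mathrm{B}}$ into this expression. Since the conventional CUMA scenario corresponds to the special case $i = \mathrm{B}$ (Bob) with no eavesdropper present and $\delta_\mathrm{B} = 1$, we may use $F_{Z_\mathrm{B}}(z) \approx 1 - \Gamma\!\left(1, z/\beta_{I_\mathrm{B}}\right)$ from~\eqref{eq13cumaCDF}.

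The key steps, in order, would be: (i) state that by Proposition~\ref{Propos4}'s hypotheses we are in the interference-limited conventional CUMA regime, so the PDF of the overall SIR is the gamma-type approximation \eqref{eq13cuma} and its CDF is \eqref{eq13cumaCDF}; (ii) evaluate this CDF at the outage threshold $z_{th}$, i.e.\ set $z_i = z_{th}$; (iii) observe that $1 - \Gamma(1, z_{th}/\beta_{I_\mathrm{B}})$ is already in closed form — in fact $\Gamma(1,x) = e^{-x}$, so one could equivalently write $P_{\mathrm{out}}(\gamma_{th}) \approx 1 - \exp(-z_{th}/\beta_{I_\mathrm{B}})$, but the statement keeps the incomplete-gamma form for consistency with \eqref{eq13cumaCDF}; and (iv) unpack $z_{th} = (2^{\gamma_{th}} - 1)\sigma_{2,\mathrm{B}}^2$ if one wants the fully explicit dependence on $\gamma_{th}$ and the system parameters entering $\beta_{I_\mathrm{B}}$ via \eqref{eq12cuma}. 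This chain of substitutions yields precisely \eqref{eq15cuma}.

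Honestly, there is no real obstacle here: the proof is a one-line corollary of Proposition~\ref{Propos2}, since the OP is by definition the CDF evaluated at a point. The only thing requiring any care is bookkeeping — making sure the subscript specializes correctly to $\mathrm{B}$, that $\delta_\mathrm{B}$ is set to one (so it silently disappears from $\beta_{I_\mathrm{B}}$ through \eqref{eq12cuma}), and that the threshold conversion $\gamma_{th} \mapsto z_{th}$ from rate to SIR is applied consistently. If anything deserves a remark, it is that the quality of this approximation inherits directly from the quality of the gamma approximation to $f_{Z_i}$ in Proposition~\ref{Propos2}, which in turn rests on the asymptotic matching argument deferred to the Appendix; the accuracy near $z_{th}$ (the body/left tail of the distribution for moderate thresholds) is exactly the regime where AoM is designed to perform well, so one may add a sentence noting that the bound is expected to be tight in the operationally relevant range and will be confirmed by the Monte Carlo results.
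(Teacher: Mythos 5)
Your proposal is correct and matches the paper's own (one-line) proof exactly: substitute the approximate CDF~\eqref{eq13cumaCDF} from Proposition~\ref{Propos2} into the OP definition~\eqref{eq10cuma} and evaluate at $z_{th}=(2^{\gamma_{th}}-1)\sigma_{2,\mathrm{B}}^2$. Your added remarks (that $\Gamma(1,x)=e^{-x}$ gives an equivalent exponential form, and that the accuracy is inherited from the AoM fit) are sound but not needed for the argument.
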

\begin{proof}
{$P_{\mathrm{out}}$
is derived by putting~\eqref{eq13cumaCDF} into~\eqref{eq10cuma} with the respective substitutions.}
\end{proof}
\begin{remark}\label{remark1}
Unlike the novel statistic distributions introduced in~\eqref{eq14cuma} and~\eqref{eq15cuma}, as well as the performance metrics in~\eqref{eq8cuma} and~\eqref{eq10cuma} of the CUMA network, where the expressions were derived in single- and double-integral form, respectively; our approximations are simple, highly accurate, and do not {require solving} any involved integrals nor incur special functions.
\end{remark}


\subsection{CUMA-aided secure communications scheme}
\subsubsection{Exact Metric}
\textbf{SOP:} This is a useful metric for quantifying leakage information in practical setups where passive eavesdropping is considered, i.e., the \ac{CSI} of the eavesdropper is not available at Alice. In this scenario, the UE $u_\mathrm{E}$ remains silent during all transmission and only intercepts a message of a selected UE $u_\mathrm{B}$ without communicating with other nodes of the CUMA network. Based on this, the secrecy rate 
\textcolor{black}{of a typical legitimate UE in the CUMA network}
 is defined as $C_\mathrm{S}=\mathrm{max}[0,C_{u_\mathrm{B}}-C_{u_\mathrm{E}}]$, where $C_{u_\mathrm{B}}=\log_2\left ( 1+z_\mathrm{B} \right )$ and $C_{u_\mathrm{E}}=\log_2\left ( 1+z_\mathrm{E} \right )$ are the rates of the main and eavesdropper channels, respectively.
The $\mathrm{SOP}$ is formulated as the probability that the secrecy rate 
drops below a target secrecy rate $R_\mathrm{S}$. Hence, since all UEs are i.i.d., \textcolor{black}{ the $\mathrm{SOP}$ of any typical legitimate UE $u_\mathrm{B}$ in the CUMA network is given by}~\cite{Barros}
\begin{align}\label{cumaeq13}
\mathrm{SOP}&=\Pr\left \{  C_\mathrm{S} < R_{\mathrm{S}}  \right \}\nonumber \\
 & =\int_{0}^{\infty}F_{Z_\mathrm{B}}\left ( \tau\left ( 1+z_\mathrm{E}\right ) -1 
\right )f_{Z_\mathrm{E}}(z_\mathrm{E})dz_\mathrm{E},
\end{align}
where $\text{Pr}\left \{ \cdot \right \}$ denotes probability, $F_{Z_\mathrm{B}}(\cdot)$ is obtained from the PDF in \eqref{cumaeq11}, and  $\tau\buildrel \Delta \over  = 2^{R_{\mathrm{S}}}$. From~\eqref{cumaeq13}, a lower bound of the $\mathrm{SOP}$, denoted as $\mathrm{SOP}_{\mathrm{L}}$, can be obtained as
\begin{align}\label{lower}
\mathrm{SOP}_{\mathrm{L}}=\Pr\left \{ z_\mathrm{B}< \tau  z_\mathrm{E} 
\right \} \leq \mathrm{SOP}.    
\end{align}

\subsubsection{Proposed SOP Approximation}
With the channel statistics at hand, i.e., $f_{Z_\mathrm{E}}(z_\mathrm{E})$ and $ F_{Z_\mathrm{B}}(z_\mathrm{B})$, which are given in closed-form fashion in \eqref{eq13cuma} and \eqref{eq13cumaCDF}, respectively. In the following proposition, we obtain a closed-form expression for the SOP lower bound of the CUMA-assisted PLS system to examine the effect of system parameters on the secrecy performance.
\begin{proposition}\label{Propos5}
The $\mathrm{SOP}_{\mathrm{L}}$ expression of the the CUMA-aided secure communications case can be approximated by
\begin{align}\label{cumaeq17}
\mathrm{SOP}_{\mathrm{L}}\approx 1-\frac{\beta_{I_\mathrm{B}}}{\tau \beta_{I_\mathrm{E}} +\beta_{I_\mathrm{B}}}.
\end{align}
\end{proposition}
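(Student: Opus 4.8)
\textbf{Proof proposal for Proposition~\ref{Propos5}.}
The plan is to start from the lower bound definition in~\eqref{lower}, namely $\mathrm{SOP}_{\mathrm{L}}=\Pr\{z_\mathrm{B}<\tau z_\mathrm{E}\}$, and evaluate this probability using the closed-form channel statistics obtained in Proposition~\ref{Propos2}. The key observation is that~\eqref{eq13cuma} and~\eqref{eq13cumaCDF} identify both $Z_\mathrm{B}$ and $Z_\mathrm{E}$ as (approximately) exponential random variables: since $\Gamma(1,x)=e^{-x}$, the CDF in~\eqref{eq13cumaCDF} reduces to $F_{Z_i}(z)\approx 1-e^{-z/\beta_{I_i}}$, consistent with the exponential PDF $f_{Z_i}(z)\approx \beta_{I_i}^{-1}e^{-z/\beta_{I_i}}$. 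Conditioning on $z_\mathrm{E}$ and integrating over its density, I would write
\begin{align}\label{eq:sopL_plan}
\mathrm{SOP}_{\mathrm{L}}\approx\int_{0}^{\infty}F_{Z_\mathrm{B}}\!\left(\tau z_\mathrm{E}\right)f_{Z_\mathrm{E}}(z_\mathrm{E})\,dz_\mathrm{E}
=\int_{0}^{\infty}\left(1-e^{-\tau z_\mathrm{E}/\beta_{I_\mathrm{B}}}\right)\frac{1}{\beta_{I_\mathrm{E}}}e^{-z_\mathrm{E}/\beta_{I_\mathrm{E}}}\,dz_\mathrm{E}.
\end{align}

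Next I would split the integral into two elementary pieces. The first term integrates to $1$, and the second is a standard exponential integral, $\int_{0}^{\infty}e^{-z_\mathrm{E}\left(\tau/\beta_{I_\mathrm{B}}+1/\beta_{I_\mathrm{E}}\right)}\,dz_\mathrm{E}=\left(\tau/\beta_{I_\mathrm{B}}+1/\beta_{I_\mathrm{E}}\right)^{-1}$, which converges since $\beta_{I_\mathrm{B}},\beta_{I_\mathrm{E}}>0$ and $\tau=2^{R_\mathrm{S}}\ge 1$. Substituting back gives
\begin{align}\label{eq:sopL_simplify}
\mathrm{SOP}_{\mathrm{L}}\approx 1-\frac{1}{\beta_{I_\mathrm{E}}}\cdot\frac{1}{\dfrac{\tau}{\beta_{I_\mathrm{B}}}+\dfrac{1}{\beta_{I_\mathrm{E}}}}
=1-\frac{\beta_{I_\mathrm{B}}}{\tau\beta_{I_\mathrm{E}}+\beta_{I_\mathrm{B}}},
\end{align}
which is the claimed expression~\eqref{cumaeq17}.

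There is essentially no substantive obstacle in this argument: once Proposition~\ref{Propos2} supplies the exponential approximations, the remainder is a one-line Laplace-transform-type integral. The only points worth stating carefully are (i) the reduction $\Gamma(1,x)=e^{-x}$ so that the CDF is genuinely exponential, and (ii) that the approximation sign in~\eqref{cumaeq17} is inherited from the approximations in Proposition~\ref{Propos2} (and ultimately Proposition~\ref{Propos1}), not from any step in this derivation, which is exact given those inputs. If desired, one could additionally remark that $\mathrm{SOP}_{\mathrm{L}}$ in~\eqref{lower} is the $R_\mathrm{S}\to 0$ form of~\eqref{cumaeq13} with the ``$-1$'' shift dropped, so~\eqref{eq:sopL_plan} is precisely~\eqref{cumaeq13} evaluated under that bound; this makes the lower-bound interpretation transparent but is not needed for the calculation itself.
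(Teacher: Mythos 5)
Your derivation is correct and follows essentially the same route as the paper: substitute the exponential approximations from Proposition~\ref{Propos2} into~\eqref{lower}, condition on $z_\mathrm{E}$, and evaluate the resulting Laplace-type integral (the paper writes the CDF term as $\Gamma\left(1,\tau z_\mathrm{E}/\beta_{I_\mathrm{B}}\right)$ and cites a table entry, whereas you simplify $\Gamma(1,x)=e^{-x}$ first, but the computation is identical). The only nit is your closing aside: the bound~\eqref{lower} comes from dropping the constant $\tau-1$ in the argument $\tau(1+z_\mathrm{E})-1=\tau z_\mathrm{E}+(\tau-1)$, not from an $R_\mathrm{S}\to 0$ limit, though this does not affect the proof.
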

\begin{proof}
See Appendix.
\end{proof}
\begin{remark}\label{remark1}
\textcolor{black}{
Notice that unlike the exact SOP in~\eqref{cumaeq13}, which is given in the double-integral form, our lower bound approximation of the SOP in~\eqref{cumaeq17} is simple and does not incur any sophisticated special functions either.}
\end{remark}

\section{NUMERICAL RESULTS AND DISCUSSIONS} \label{sect:numericals}
\begin{figure*}[h!]
    \centering
\psfrag{A}[Bc][Bc][0.5]{$\mathrm{\textit{N}=36}$}
\psfrag{B}[Bc][Bc][0.5]{$\mathrm{\textit{N}=144}$}
    \subfigure[$f=6$ GHz] {\includegraphics[width=0.32\textwidth]{./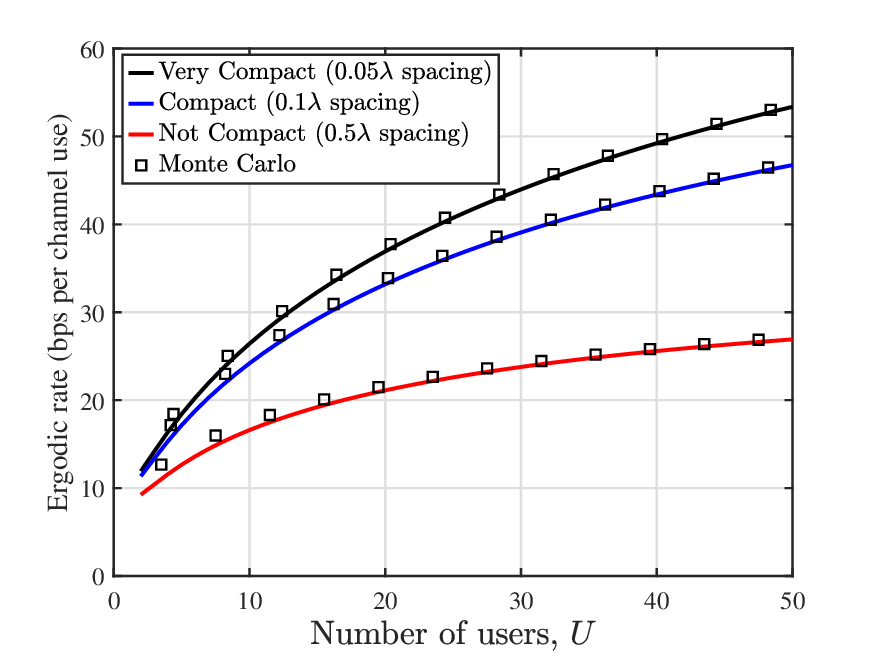}} 
\psfrag{A}[Bc][Bc][0.5]{$\mathrm{\textit{N}=36}$}
\psfrag{B}[Bc][Bc][0.5]{$\mathrm{\textit{N}=100}$}
\psfrag{C}[Bc][Bc][0.5]{$\mathrm{\textit{N}=256}$}
    \subfigure[$f=26$ GHz]{\includegraphics[width=0.32\textwidth]{./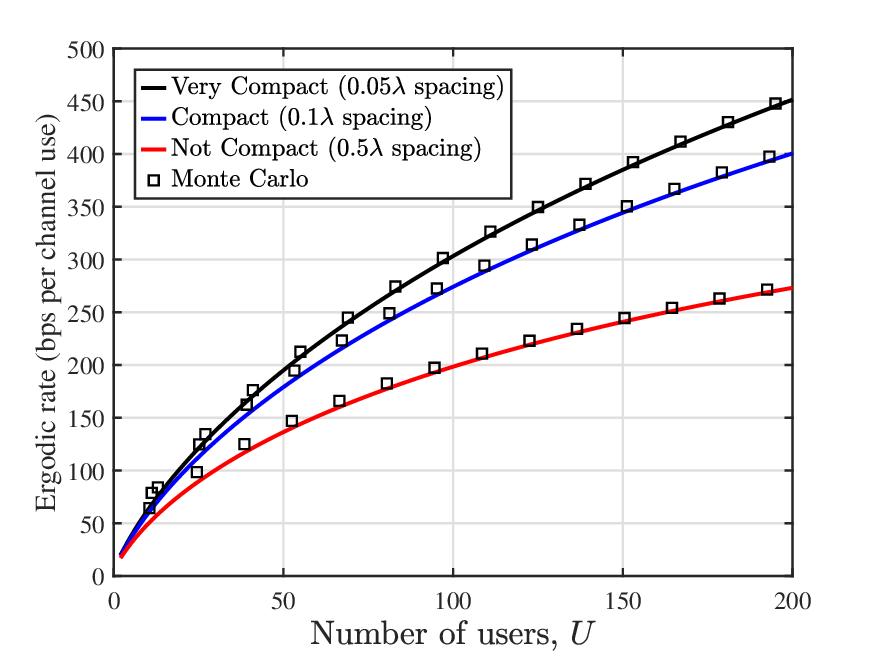}}
\psfrag{A}[Bc][Bc][0.5]{$\mathrm{\lambda/12}$}
\psfrag{B}[Bc][Bc][0.5]{$\mathrm{\lambda/8}$}
\psfrag{C}[Bc][Bc][0.5]{$\mathrm{\lambda/4}$}
\psfrag{D}[Bc][Bc][0.5]{$\mathrm{i.i.d. \hspace{1mm}}\mathrm{fading}$}
    \subfigure[$f=40$ GHz]{\includegraphics[width=0.32\textwidth]{./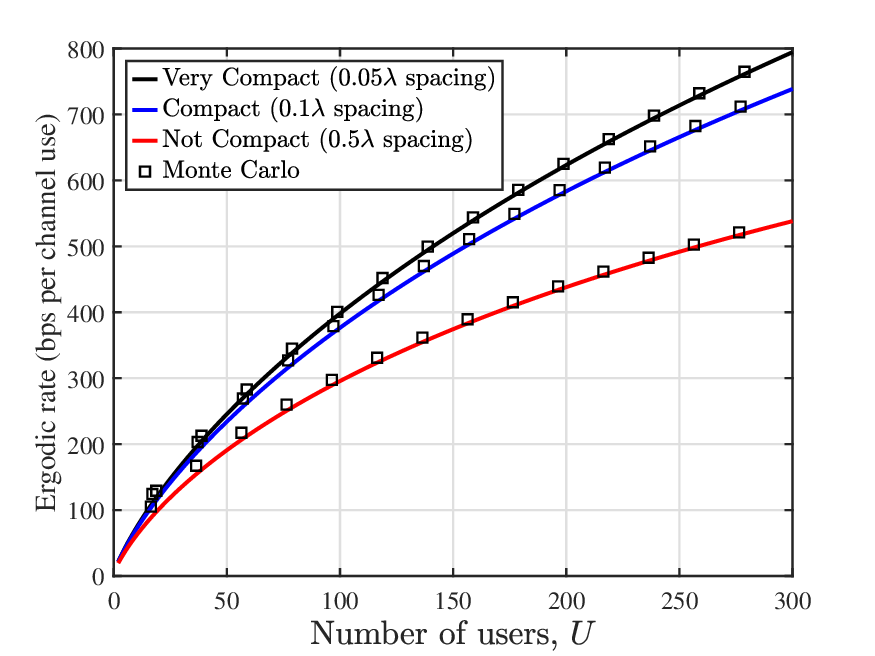}}
    \caption{ER vs. the number of users, $U$, for different operating frequencies under rich scattering. Markers denote Monte Carlo Simulations, whereas the solid line represents the proposed approximation given in~\eqref{eq14cuma}.}
    \label{fig:foobar}
\end{figure*}

In this section, we evaluate the effect of the system parameters (e.g., port density, number of users, and antenna size) on the two CUMA-based scenarios: $i)$ the conventional CUMA and $ii)$ the CUMA-aided secure communications scheme, as well as the goodness-of-fit of the proposed approximations. For this purpose, the settings assumed for the port density and frequencies are shown in Table~\ref{table1}, where the cases under analysis are:  Case I: a non-compact (NC) case with a minimum spacing of $0.5\lambda$, Case II:  a compact (C) case with a minimum spacing of $0.1\lambda$, and Case III: a very compact (VC) case with a minimum spacing of $0.05\lambda$. Besides, the size of the 2D FAS at each UE $u_i$ for $i \in \left \{ \mathrm{B}, \mathrm{E}  \right \}$ is assumed to be $15$ cm $\times$ 8 cm, which is a typical mobile phone size\footnote{It is worth mentioning that the number of ports, i.e., $N_1$ and $N_2$, are chosen so that they do not exceed the typical cell phone size when distributed over $W$.}. Specifically, by matching $( W_1\lambda, W_2 \lambda)=(15,8)$, the $W_j$ values   with $j \in \{1,2 \}$ can be estimated. With such values, the port density can be calculated as $N_1\approx\tfrac{W_1}{x}+1$, where $x \in \{0.05, 0.1, 0.5  \}$ is the minimum space depending on the CUMA operating case. Likewise, $N_2$ is calculated as $N_1$ with the respective parameter substitutions. Unless remarked otherwise, for all plots, we consider that $\Omega_i=\Omega=1$, and the correlation model for the 2D FAS at the UEs $u_i$ is formulated from~\eqref{cumaeq9}. Also, we provide illustrative numerical results along with Monte Carlo simulations to verify the proposed analytical derivations.
Next, each of the proposed CUMA scenarios is analyzed.

\begin{table}[t!]
	\scriptsize 
    	\caption{CUMA network parameter settings.} 
   \centering
	\begin{tabular}{cccc}
		\toprule
		\multicolumn{1}{c}{\multirow{1}{*}{}} & \multicolumn{2}{c}{\textbf{\hspace{6mm}Compactness or port density, $N_1\times N_2$ }}\\
		\cmidrule(lr){2-4}
		  \textit{ \textbf{f}}\textbf{(GHz)}  &  \textbf{Case I (NC)} &  \textbf{Case II (C)} & \textbf{Case III (VC)} \\
		\cmidrule(lr){1-4}
		6  & 7$\times$4    &31$\times$4  &61$\times$4 \\
		\cmidrule(lr){2-4}
		26 & 27$\times$14  & 131$\times$14  &261$\times$14 \\
		\cmidrule(lr){2-4}
		40 & 41$\times$22 &201$\times$22  &401$\times$22  \\
     	\cmidrule(lr){1-4}
	\end{tabular}\label{table1}
	\vspace{-5mm}
\end{table}

\subsection{Conventional CUMA Analysis}
In this section, we evaluate the effect of the system parameters on the OP and ER behavior in the conventional CUMA network under rich scattering.\footnote{{As originally conceived, the CUMA scheme can operate in two different channel conditions, namely, i) a finite scatterer channel model, and ii) a rich scattering channel model (characterized by a Rayleigh fading). The most beneficial channel condition for CUMA is a rich scattering because this setup no longer needs a precoding scheme at BS to inject sufficient channel differentiation between the UEs for FAS to work well.}}

In Figs.~\ref{fig:foobar}a-\ref{fig:foobar}c, the ER is illustrated as a function of the number of users $U$ by varying the compactness, i.e., the port density at the FAS-enabled UE for different frequency operations. Note that our proposed approximations perfectly match the Monte Carlo simulations in all plots of such figures. Now, in Fig.~\ref{fig:foobar}a, it can be observed that packing the antenna ports more densely is favorable to the ER performance, as expected. For instance, note that Case III (VC), where the adjacent antenna ports are so tiny (i.e., the ports are $0.05\lambda$ apart), {there is better} ER performance. Conversely, 
when the antenna port separation is pushed to $0.5\lambda$, as in Case I (NC), a lower ER behavior appears. Therefore, the port density at the FAS-enabled UE arises as an essential factor for boosting the performance of CUMA at $6$ GHz frequency operation. The previously described CUMA behavior applies to Figs.~\ref{fig:foobar}b and~\ref{fig:foobar}c as well. Likewise, from Figs.~\ref{fig:foobar}a-\ref{fig:foobar}c, it is evident that the ER in the CUMA network is proportional to $U$; i.e., as the number of users $U$ increases, the ER performance also improves monotonically. 
Hence, this feature turns CUMA into an attractive technique for simple massive connectivity without the need for sophisticated methods {to mitigate} interference at the UE or precoding at the BS. Finally, from all traces in these figures, it is shown that when the number of ports increases {due to} higher frequencies (see Table~\ref{table1}), the ER performance is {remarkably enhanced}. 
\begin{figure}[t!]
\centering
\psfrag{A}[Bc][Bc][0.7]{$\mathrm{\textit{n}=49}$}
\psfrag{B}[Bc][Bc][0.7]{$\mathrm{\textit{n}=100}$}
\psfrag{C}[Bc][Bc][0.7]{$\mathrm{\textit{n}=196}$}
\includegraphics[width=80mm]{./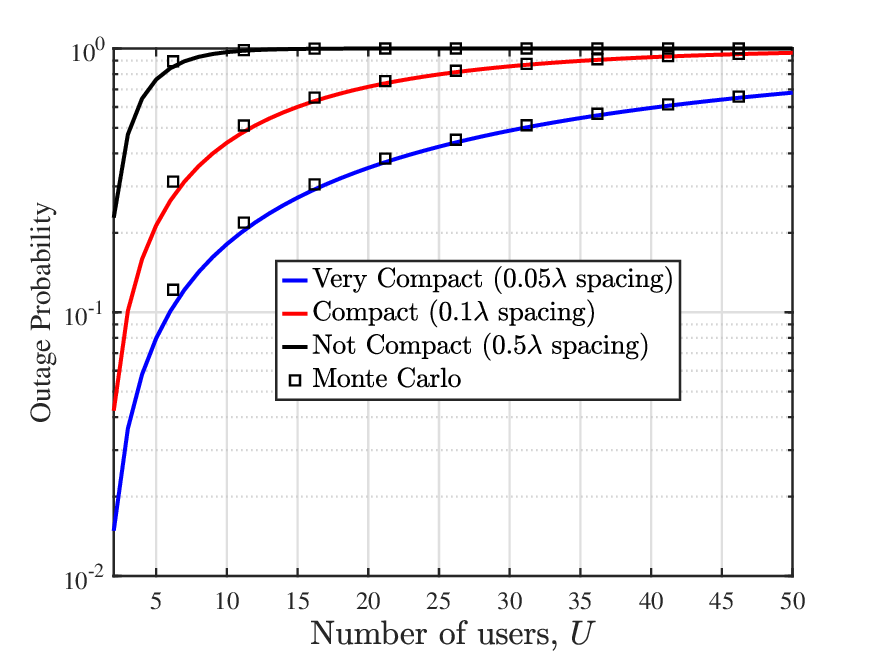}
\caption{{\ac{OP} of any typical UE of the CUMA network by varying $U$ with a fixed frequency operation of $f=6$ GHz under rich scattering. } Markers denote Monte Carlo Simulations, whereas the solid line represents the proposed approximation given in~\eqref{eq15cuma}.}
\label{figOPCUMA}
\end{figure}
\begin{figure}[t!]
\centering
\psfrag{A}[Bc][Bc][0.7]{$\mathrm{\textit{n}=49}$}
\psfrag{B}[Bc][Bc][0.7]{$\mathrm{\textit{n}=100}$}
\psfrag{C}[Bc][Bc][0.7]{$\mathrm{\textit{n}=196}$}
 \includegraphics[width=80mm]{./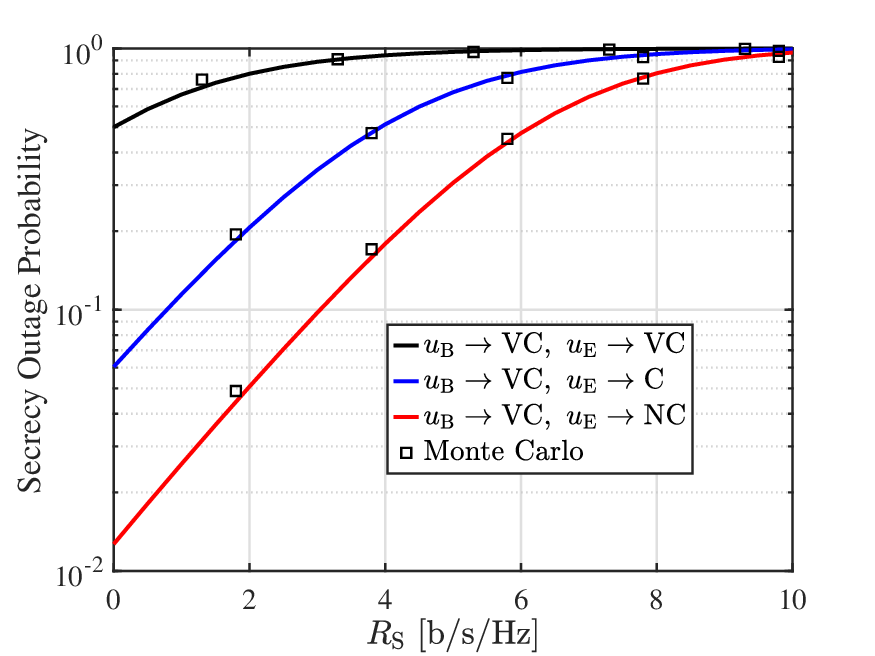}
\caption{SOP vs. the threshold rate, $R_{\rm S}$, by assuming $f=6$ GHz, $\delta_i=1$, and $U=20$.  Markers denote Monte Carlo Simulations, whereas the solid line represents the proposed approximation given in~\eqref{cumaeq17}. }
\label{fig1SOP}
\end{figure}

In Fig.~\ref{figOPCUMA}, we exemplify the OP vs. the number of users $U$ by considering a fixed frequency operation of $6$ GHz and the configuration parameters given in cases I, II, and III, as described in Table~\ref{table1}. In this scenario, we explore the impact of having different port densities on the OP performance of any typical UE of the CUMA network. From all curves, it can be observed that increasing the number of UEs (i.e., interference) leads to poor OP performance, as expected. Also, we see that the OP performance depends on the different levels of compactness; for instance, the denser port configuration (i.e., Case III) performs better than Cases I and II (less compact). This OP behavior could improve notably by increasing the operating frequency.
\subsection{CUMA-aided Secure Communication Analysis}
In this section, we explore the impact of the system parameters on the $\mathrm{SOP}$ of proposed CUMA schemes, also analyzing the effect of an IC capability for the legitimate user $u_\mathrm{B}$.

In Figs.~\ref{fig1SOP} and~\ref{fig2SOP}, we show the results of the SOP assuming the CUMA setup as it was initially conceived, i.e., no precoding optimization at the BS, nor IC at the UEs. Specifically, in Fig.~\ref{fig1SOP}, we present the $\mathrm{SOP}$ as a function of the threshold rate $R_{\rm S}$ for $\delta_i=1$ and different port density configurations, i.e., compactness at UE $u_i$ operating at $6$ GHz frequency operation. Here, it can be observed that when the antenna ports are more densely packed at the legitimate UE, i.e., $u_\mathrm{B}\rightarrow \mathrm{VC}$ (the ports are $0.05\lambda$ apart), and packing the antenna ports less densely at the eavesdropper 
UE, i.e., $u_\mathrm{E}\rightarrow \mathrm{NC}$ (the port separation is pushed to $0.5\lambda$), a reasonably good secrecy performance is attained. Otherwise, when both $u_\mathrm{B}$ and $u_\mathrm{E}$ have the same compactness capacities, i.e., $u_\mathrm{B}\rightarrow \mathrm{VC}$ and $u_\mathrm{E}\rightarrow \mathrm{VC}$, secure communication becomes unfeasible.  
Finally, note that (as in the other figures) the proposed approximations are in excellent agreement with the Monte Carlo simulations.
\begin{figure}[t!]
\centering
\psfrag{A}[Bc][Bc][0.7]{$\mathrm{\textit{n}=49}$}
\psfrag{B}[Bc][Bc][0.7]{$\mathrm{\textit{n}=100}$}
\psfrag{C}[Bc][Bc][0.7]{$\mathrm{\textit{n}=196}$}
 \includegraphics[width=80mm]{./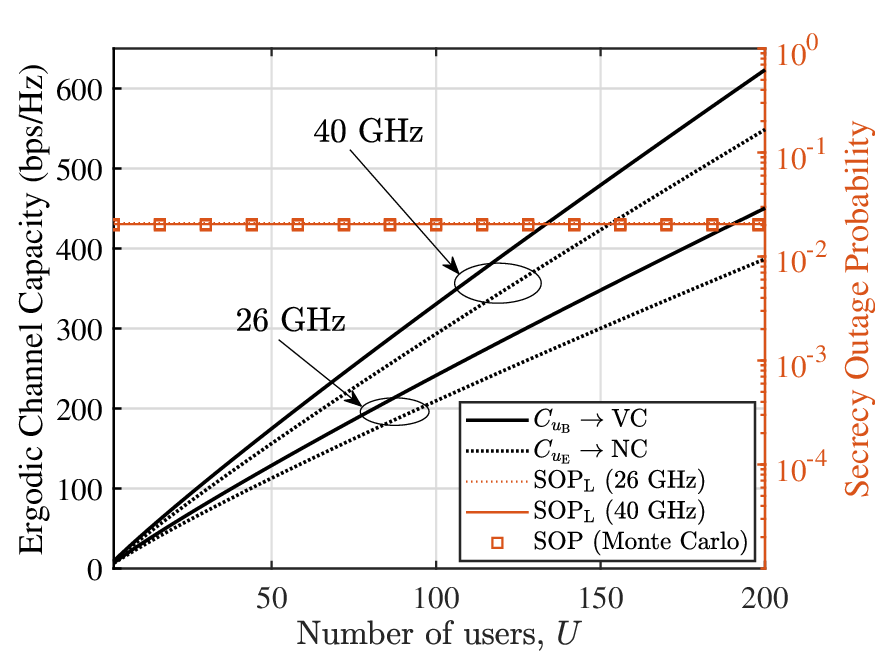}
\caption{SOP vs. the number of users, $U$, with $f=26/40$ GHz, $\delta_i=1$, and \textcolor{black}{$R_\mathrm{S}=1$}. Markers denote Monte Carlo Simulations, whereas the solid line represents the proposed approximation given in~\eqref{cumaeq17}.}
\label{fig2SOP}
\end{figure}
In Fig.~\ref{fig2SOP}, we show the $\mathrm{SOP}$ vs. $U$ for different frequency operations in the absence of IC, assuming $u_\mathrm{B}\rightarrow\mathrm{VC}$, $u_\mathrm{E}\rightarrow\mathrm{NC}$. The ergodic channel rates for the individual Bob and Eve's paths are included for reference purposes. From the $\mathrm{SOP}$ plots, it is clear that increasing the operating frequencies does not favor the secrecy performance. In fact, the SOP behavior at these frequencies is practically identical; the main reason for this event is explained as follows. Note that the individual rates for the main and eavesdropper channels, denoted by $C_{u_\mathrm{B}}$ and $C_{u_\mathrm{E}}$, respectively, increase when going from $26$ GHz to $40$ GHz, as expected. However, the capacity gain obtained from the difference between $C_{u_\mathrm{B}}$ and $C_{u_\mathrm{E}}$ is barely affected regardless of the operating frequency. Hence, the secrecy performance when $u_\mathrm{B}$ and $u_\mathrm{E}$ 
have the same port density remains constant, and unaffected by changes in the operation frequency when the FA physical size is not changed. We also observe an irreducible SOP floor as the number of users grows, since for larger $U$ the interference both at Bob and Eve increases proportionally, when the new UEs that join the CUMA network have the same port density.

\begin{figure}[t!]
\centering
\psfrag{A}[Bc][Bc][0.7]{$\mathrm{\textit{n}=49}$}
\psfrag{B}[Bc][Bc][0.7]{$\mathrm{\textit{n}=100}$}
\psfrag{C}[Bc][Bc][0.7]{$\mathrm{\textit{n}=196}$}
 \includegraphics[width=80mm]{./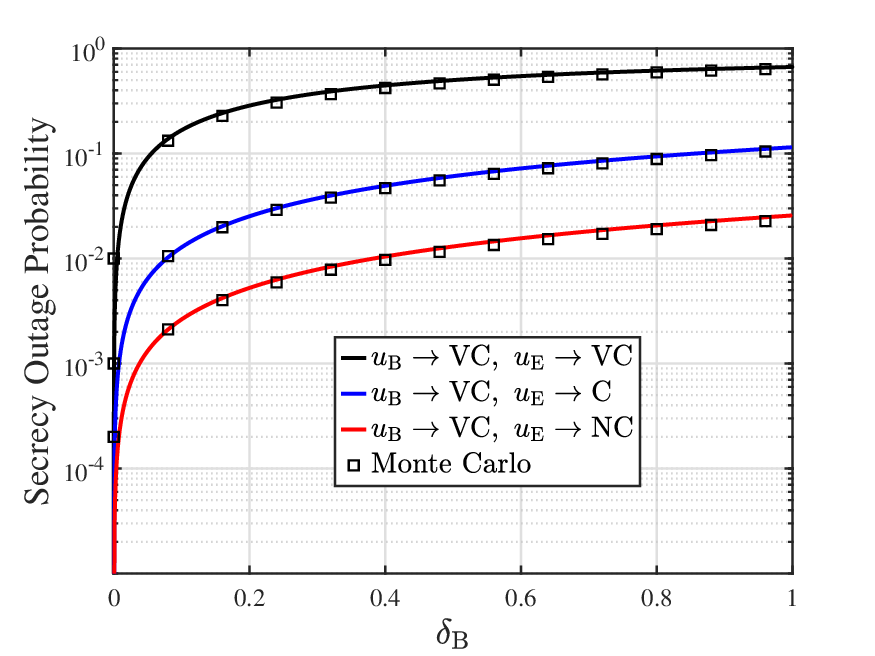}
\caption{SOP vs. the interference cancellation factor, $\delta_{\rm B}$, for different port density configurations. Also, $f=6$ GHz,  $\delta_\mathrm{E}=1$, \textcolor{black}{ $U=20$, and $R_\mathrm{S}=1$}. Markers denote Monte Carlo Simulations, whereas the solid line represents the proposed approximation given in~\eqref{cumaeq17}.}
\label{fig3SOP}
\end{figure}

 Given the limited capability of the baseline CUMA scheme to provide a satisfactory secure performance, we will now analyze the case on which partial IC features are allowed at the receiver side. This is captured by the $\delta_i$ parameter that ranges between $0$ and $1$, implying that as $\delta_i$ is reduced the IC becomes increasingly better. We make the practical assumption that interference mitigation at the legitimate UE $u_\mathrm{B}$ is achieved by using a precoding sequence at the BS to transmit UE $u_\mathrm{B}$'s signals emulating an artificial noise injection technique, for which the eavesdropper is unaware of and hence $u_\mathrm{E}$ cannot cancel it. Based on these premises, in Fig.~\ref{fig3SOP}, we examine to what extent implementing an IC capability at Bob's side can be beneficial for secrecy performance. For this, the $\mathrm{SOP}$ vs. $\delta_\mathrm{B}$ is evaluated for different port density settings at both UEs $u_\mathrm{B}$ and $u_\mathrm{E}$, and $f=6$ GHz. Here, unlike previous figures, partial-IC is performed at UE $u_\mathrm{B}$ whereas the UE $u_\mathrm{E}$ cannot mitigate interference, so $ \delta_\mathrm{E}=1$. It becomes clear that $\mathrm{SOP}$ performance enhances as $\delta_\mathrm{B}$ decreases, which reveals that to achieve suitable secrecy levels when both $u_\mathrm{B}$ and $u_\mathrm{E}$ are equipped with FAS, it is necessary to combine the CUMA technology with some additional signal processing capabilities (even simple) at the target UE $u_\mathrm{B}$. Note that for very low $\delta_\mathrm{B}$, the interference-limited assumption is not be met, and secrecy performance will be limited by noise under these circumstances.

\textcolor{black}{In Fig.~\ref{fig4SOP}, we explore the secrecy performance by varying the IC capability at UE $u_\mathrm{B}$, i.e., $\delta_\mathrm{B}$ and assuming different combinations of compactness capacities at $u_\mathrm{B}$ and $u_\mathrm{E}$ with a fixed $f = 6$ GHz, and $\delta_\mathrm{E}=1$. Specifically, for $u_\mathrm{E}\rightarrow\mathrm{NC}$, we set $(N_1\times N_2)=(7\times4)$, so $N=28$ ports in this case. Conversely, for $u_\mathrm{B}\rightarrow\mathrm{VC}$, we consider that $(N_1\times N_2)$ goes from $(61\times4)$ to $(61\times 33)$, i.e., $N$ ranging from $244$ to $2013$ for this scenario. It can be observed that the combination of high IC capability (e.g., $\delta_\mathrm{B}=0.1$) with an increased number of ports in the VC scheme at UE $u_\mathrm{B}$ is favorable from a PLS perspective. This confirms that secrecy performance is enhanced remarkably thanks to the imperfect IC capability and extreme VC configuration at the legitimate UE. Finally, note that in all curves, our analytical expressions, for OP, ER and lower bound of the SOP, perfectly match with Monte Carlo simulations.}

\begin{figure}[t!]
\centering
\psfrag{A}[Bc][Bc][0.7]{$\mathrm{\textit{n}=49}$}
\psfrag{B}[Bc][Bc][0.7]{$\mathrm{\textit{n}=100}$}
\psfrag{C}[Bc][Bc][0.7]{$\mathrm{\textit{n}=196}$}
 \includegraphics[width=80mm]{./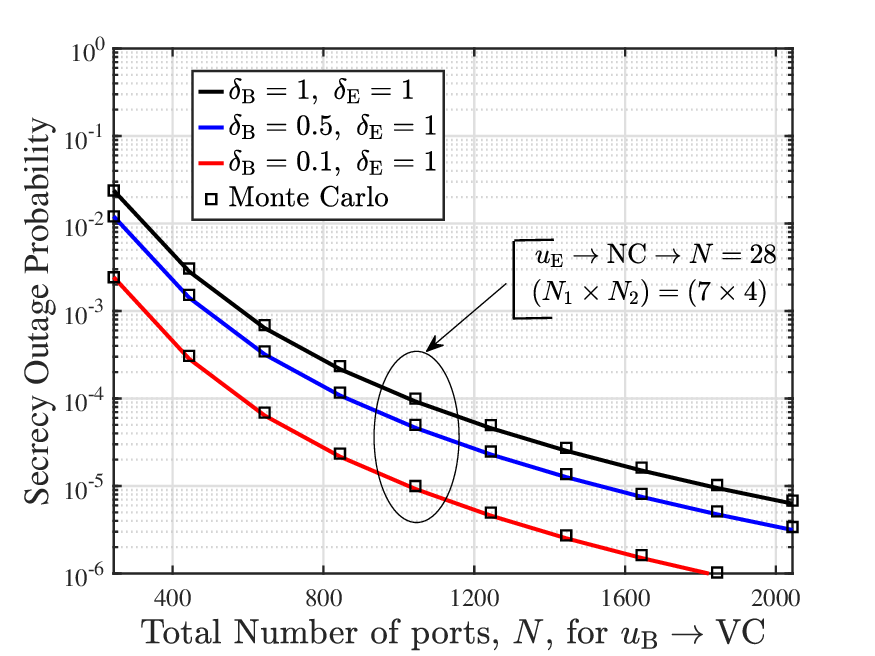}
\caption{SOP by varying the total number of ports, $N$, at UE $u_\mathrm{B}\rightarrow\mathrm{VC}$ with $f=6$ GHz, $R_\mathrm{S}=1$, $U=20$ and $\delta_\mathrm{E}=1$. Markers denote Monte Carlo Simulations, whereas the solid line represents the proposed approximation given in~\eqref{cumaeq17}.}
\label{fig4SOP}
\end{figure}

\section{CONCLUSION}
In this work, we developed a set of mathematical tools to enable the performance analysis of the novel open-loop massive connectivity scheme called CUMA. Employing asymptotic techniques, we derived a closed-form approximation for the PDF of the SIR. {Using this result, the ER and OP metrics were evaluated straightforwardly, avoiding the need for double-fold integrals.} The results illustrated that CUMA with a good level of compactness (i.e.,
$0.05\lambda$ spacing) provided remarkable gains in terms of the ER performance over the non-compact case.  
Then, the secrecy performance of CUMA was evaluated for the first time in the literature. The PLS performance in a CUMA network was investigated, combined with the role of imperfect interference mitigation techniques at the legitimate UE. Analytical approximate results for the SOP are obtained, which are much more amenable than those obtained using the original formulation in \cite{WongCuma}. Results reveal that one of CUMA's strongest features, i.e., the lack of side-processing mechanisms, limits its achievable secrecy performance, as Bob and Eve SIRs have the same statistics when the same FA scheme is used on both sides. We observe that partially relaxing the original constraints of the CUMA by incorporating some simple IC capability can be beneficial for PLS. Hence, the trade-off between complexity and secure performance highlights CUMA's potential for new security-constrained use cases.

\appendices

\label{appendix1} 

\centerline{
\textbf{Proof of Proposition~\ref{Propos1}}}
As a first step, a suitable approximation for the PDF given in~\eqref{cumaeq5} is needed. For this purpose, we begin by expressing the Whittaker $M$ function in a more tractable mathematical fashion. Hence, with the aid of~\cite[Eq.~(9.220.2)]{Gradshteyn}, it follows that $\mathcal{M}_{a,b}(t)=t^{b+\frac{1}{2}}\exp\left (-\tfrac{t}{2}  \right ) {}_1F_1\left (b-a+\frac{1}{2},2b+1;t  \right )$, where ${}_1 F_1\left(\cdot,\cdot;\cdot\right)$ denotes the confluent hypergeometric function~\cite[Eq.~(9.210.1)]{Gradshteyn}. Next,  by applying the asymptotic relationships:\footnote{The asymptotic behavior around zero of the $\exp(\cdot)$ and $I_{(\cdot)}(\cdot)$ functions is governed by the first term of the Maclaurin series representation of such  functions.}:
$i)$ $\exp(x)\simeq 1$, and $ii)$ $ {}_1F_1\left (a,b;t  \right )\simeq 1+\tfrac{at}{b}$, and after some manipulations, the asymptotic behavior of~\eqref{cumaeq5} in the form $f_{Z_{I_i}}(z)\simeq a_0 z^{b_0}$ can be expressed by 

\begin{align}\label{eq1AP}
    f_{Z_{I_i}}(z)&\simeq\underset{a_0}{\underbrace{\frac{\exp\left ( -\frac{ \mu_i^2}{2\sigma_{1,i}^2 } \right ) \mathcal{I}\sigma_{2,i} \left ( \delta_i \right )^\frac{1}{2}}{2\sigma_{1,i}\Gamma\left ( \tfrac{\mathcal{I}}{2} \right )}}}z^{  \overset{b_0}{\overbrace{-1/2 }} },
\end{align}
where $\mu_i= \tfrac{\overline{N}}{2}\sqrt{\tfrac{\Omega_i}{\pi}}$, and $a_0$ and $b_0$ are the linear and the angular coefficients that capture the asymptotic behavior of the exact PDF in~\eqref{cumaeq5}.
Then, with that result,~\eqref{cumaeq5} is approximated by using a simple Gamma distribution whose PDF and asymptotic PDF are given, respectively, by
\begin{multicols}{2}
  \begin{align}\label{eq2AP}
    \widetilde{f}_{X}(x)=\frac{x^{\alpha_{I_i}-1}e^{-\frac{x}{\beta_{I_i}}}}{\Gamma(\alpha_{I_i})\beta_{I_i}^{\alpha_{I_i}}}
  \end{align}
\begin{align}\label{eq3AP}
     \hspace{-2mm}\widetilde{f}_{X}^{\rm asy}(x)\simeq\underset{\widetilde{a}_0}{\underbrace{\tfrac{1}{\Gamma(\alpha_{I_i})\beta_{I_i}^{\alpha_{I_i}}}}}x^{\overset{\widetilde{b}_0}{\overbrace{\alpha_{I_i}-1}}},
  \end{align}
\end{multicols}
By matching the asymptotic behavior of the exact PDF given in~\eqref{eq1AP} with the approximate one in~\eqref{eq3AP}, i.e., $a_0=\widetilde{a}_0$ and $b_0=\widetilde{b}_0$, the scaling parameters of the approximate distribution in~\eqref{eq2AP} are found as
$\alpha_{I_i}=1/2$, and $\beta_{I_i}=\left ( \tfrac{1}{a_0 \sqrt{\pi}} \right )^{2}$.  Next, by substituting~\cite[Eq.~(34)]{WongCuma} and~\eqref{eq2AP} into~\cite[Eq.~(70)]{WongCuma}, it follows that
\begin{align}\label{eq4AP}
f_{Z_{I_i}}(z)&\approx \tfrac{2^{-\frac{\mathcal{I}}{2}}(\sigma_{2,i}^2)^{\frac{\mathcal{I}}{2}-1}z^{-\frac{1}{2} } }{\delta_i^{\frac{\mathcal{I}}{2}-1}\Gamma\left ( \frac{I}{2}\right )\Gamma\left ( \frac{1}{2}\right )\beta_{I_i}^{\frac{1}{2}}}   \underset{I_1}{\underbrace{\int_{0}^{\infty} x^{-\frac{3-\mathcal{I}}{2}}e^{-\tfrac{2\delta_izx+\sigma_{2,i}^2\beta_{I_i} x}{2\delta_i\beta_{I_i}}}dx}}.
\end{align}
Using~\cite[Eq.~(3.351.3)]{Gradshteyn}, $I_{1}$ can be solved in closed-form. Then, performing some manipulations in~\eqref{eq4AP},~\eqref{cumaeq5} can be approximated as stated in~\eqref{eq11cuma}, which completes the proof.\\

\centerline{
\textbf{Proof of Proposition~\ref{Propos2}}}
With~\eqref{eq11cuma} at hand, our goal is to approximate the PDF and CDF of a UE $u_i$ of the CUMA network used in the integral defined in~\eqref{cumaeq11} by using again the Gamma distribution
via the asymptotic matching technique. Hence, the
approximate PDF for a UE $u_i$ of the CUMA scheme can be expressed as
\begin{align}\label{eq5AP}
  \widetilde{f}_{Z_i}(z_i)=\frac{1}{\underset{\widetilde{c}_0}{\underbrace{\Gamma\left ( \alpha_{i} \right ){\beta_{i}}^{ \alpha_{i} } }}}z_i^{\overset{\widetilde{d}_0}{\overbrace{\alpha_{i}-1}}}\exp\left ( -\frac{z_i}{\beta_{i}
} \right ).
\end{align}
To apply the asymptotic matching method, we need to find the asymptotic behavior of~\eqref{cumaeq11}. To this end, by appropriately substituting the shape parameters of the distributions $Z_{I_i}$ and $Z_{Q_i}$ given in~\eqref{eq11cuma}
\footnote{
Since $Z_{I_i}$ and $Z_{Q_i}$ are i.i.d., they have the same distribution. Therefore, the fitting parameters of $Z_{I_i}$ and $Z_{Q_i}$ are given by $\left \{ \alpha_{I_i}^1,\beta_{I_i}^1 \right \}$ and $\left \{ \alpha_{I_i}^2,\beta_{I_i}^2\right \}$, respectively.}
into~\cite[Eq.~(23)]{Perim} and after some manipulations, the linear and angular coefficients that govern the asymptote of~\eqref{cumaeq11}, i.e., $f_{Z_i}(z_i)\simeq c_0 z_i^{d_0}$, are given 
\begin{align}\label{eq6AP}
 c_0 = \frac{\prod_{j=1}^{2}\left [ \tfrac{1}{{\beta_{I_i}^{j}}^{\alpha_{I_i}^{j}} }  \right ]}{\Gamma \left( \sum_{j=1}^{2}\alpha_{I_i}^{j} \right )} , \hspace{2mm}  d_0  = 1+\sum_{j=1}^{2}\alpha_{I_i}^{j}.
\end{align}
Then, by matching $c_0=\widetilde{c}_0$ and $d_0=\widetilde{d}_0$, the fitting parameters are found as $\alpha_i=1$ and $\beta_i=\beta_{I_i}$, which substituted in~\eqref{eq5AP} allow to obtain~\eqref{eq13cuma} as an approximation of~\eqref{eq11cuma}. Finally, the CDF in~\eqref{eq13cumaCDF} of UE $u_i$, i.e.,  $F_{Z_i}(z_i)$ is obtained directly from~\eqref{eq13cuma}. This completes the proof.\\

\centerline{
\textbf{Proof of Proposition~\ref{Propos3}}}
To obtain a closed-form expression of the ER, we replace~\eqref{eq13cuma} with the respective substitutions  into~\eqref{eq8cuma}, this get
\begin{align}\label{eq8AP}
C\approx \frac{U
 }{{\beta_{I_\mathrm{B}}}}\underset{I_2}{\underbrace{\int_{0}^{\infty}\log_2\left ( 1+\frac{z_\mathrm{B}}{\sigma_{2,\mathrm{B}}^2} \right )
\exp\left ( -\frac{z_\mathrm{B}}{\beta_{I_\mathrm{B}}
} \right )dz_\mathrm{B}.}}
\end{align}
The resulting integral $I_2$ can be evaluated in closed form with the aid of~\cite[Eq.~(4.337.2)]{Gradshteyn}; hence,~\eqref{eq8cuma} can be approximated as stated in~\eqref{eq14cuma}, which completes the proof.\\

\centerline{
\textbf{Proof of Proposition~\ref{Propos5}}}
By plugging~\eqref{eq13cuma} and~\eqref{eq13cumaCDF} with the respective parameter substitutions into \eqref{lower}, it follows that
\begin{align}\label{eq9AP}
  \mathrm{SOP}_{\mathrm{L}}\approx  & \underset{I_3}{\underbrace{\int_{0}^{\infty} \frac{1
 }{{\beta_{I_\mathrm{E}}}}\exp\left ( -\frac{z_\mathrm{E}}{\beta_{I_\mathrm{E}}
} \right )dz_\mathrm{E}}}
\nonumber \\ &-\frac{1
 }{{\beta_{I_\mathrm{E}}}}\underset{I_4}{\underbrace{\int_{0}^{\infty}\exp\left ( -\frac{z_\mathrm{E}}{\beta_{I_\mathrm{E}}
} \right )\Gamma\left (1,  \frac{\tau z_\mathrm{E}}{\beta_{I_\mathrm{B}}
} \right )dz_\mathrm{E}}}.
\end{align}
Here, $I_3=1$ since the definite integral of Exponential \ac{PDF} over the entire space is always equal to one. Then, by solving the corresponding integral in $I_4$ with the help of~\cite[Eq.~(3.310)]{Gradshteyn}, we have that
\begin{align}\label{eq10AP}
 I_4=\frac{{\beta_{I_\mathrm{B}}}{\beta_{I_\mathrm{E}}}}{{\beta_{I_\mathrm{B}}}+\tau {\beta_{I_\mathrm{E}}}}.
\end{align}
 Finally, by plugging~\eqref{eq10AP} into~\eqref{eq9AP}, \eqref{cumaeq17} is attained in closed-form formulation. This completes the proof.

\bibliographystyle{ieeetr}
\bibliography{bibfile}

\begin{thebibliography}{10}

\bibitem{Bruno}
B.~Clerckx, Y.~Mao, E.~A. Jorswieck, J.~Yuan, D.~J. Love, E.~Erkip, and D.~Niyato, ``A primer on rate-splitting multiple access: Tutorial, myths, and frequently asked questions,'' {\em IEEE J. Sel. Areas Commun.}, vol.~41, no.~5, pp.~1265--1308, 2023.

\bibitem{nomaa}
H.~R.~C. Mora, N.~V.~O. Garzón, F.~D.~A. García, J.~D.~V. Sánchez, and O.~L.~A. López, ``Secure transmission for uplink scma systems over $\kappa$-$\mu$ fading channels,'' {\em IEEE Trans. Veh. Technol.}, vol.~73, no.~1, pp.~754--770, 2024.

\bibitem{TongNewPAra}
K.-K. Wong, K.-F. Tong, and C.-B. Chae, ``Fluid antenna system—part iii: A new paradigm of distributed artificial scattering surfaces for massive connectivity,'' {\em IEEE Commun. Lett.}, vol.~27, no.~8, pp.~1929--1933, 2023.

\bibitem{Ref2}
K.-K. Wong, A.~Shojaeifard, K.-F. Tong, and Y.~Zhang, ``Fluid antenna systems,'' {\em IEEE Trans. Wireless Commun.}, vol.~20, no.~3, pp.~1950--1962, 2021.

\bibitem{ByoungRO}
K.-K. Wong and et~al, ``Fluid antenna system—part ii: Research opportunities,'' {\em IEEE Commun. Lett.}, vol.~27, no.~8, pp.~1924--1928, 2023.

\bibitem{AccessFAS}
K.-K. Wong and K.-F. Tong, ``Fluid antenna multiple access,'' {\em IEEE Trans. Wireless Commun.}, vol.~21, no.~7, pp.~4801--4815, 2022.

\bibitem{WongfFAMA}
K.-K. Wong, K.-F. Tong, Y.~Chen, and Y.~Zhang, ``Fast fluid antenna multiple access enabling massive connectivity,'' {\em IEEE Commun. Lett.}, vol.~27, no.~2, pp.~711--715, 2023.

\bibitem{WongsFAMA}
K.-K. Wong, D.~Morales-Jimenez, K.-F. Tong, and C.-B. Chae, ``Slow fluid antenna multiple access,'' {\em IEEE Trans. Wireless Commun.}, vol.~71, no.~5, pp.~2831--2846, 2023.

\bibitem{sFAMArefXL}
K.-K. Wong and et~al, ``Extra-large mimo enabling slow fluid antenna massive access for millimeter-wave bands,'' {\em IET Ele Lett.}, vol.~58, no.~25, pp.~1016--1018, 2022.

\bibitem{NoorsFAMA}
N.~Waqar, K.-K. Wong, K.-F. Tong, A.~Sharples, and Y.~Zhang, ``Deep learning enabled slow fluid antenna multiple access,'' {\em IEEE Commun. Lett.}, vol.~27, no.~3, pp.~861--865, 2023.

\bibitem{WongCuma}
K.-K. Wong, C.-B. Chae, and K.-F. Tong, ``Compact ultra massive antenna array: A simple open-loop massive connectivity scheme,'' {\em IEEE Trans. Wireless Commun.}, pp.~1--1, 2023.

\bibitem{Ref3}
M.~Khammassi, A.~Kammoun, and M.-S. Alouini, ``A new analytical approximation of the fluid antenna system channel,'' {\em IEEE Trans. Wireless Commun.}, pp.~1--1, 2023.

\bibitem{PLS1}
B.~Tang, H.~Xu, K.-K. Wong, K.-F. Tong, Y.~Zhang, and C.-B. Chae, ``Fluid antenna enabling secret communications,'' {\em IEEE Commun. Lett.}, vol.~27, no.~6, pp.~1491--1495, 2023.

\bibitem{PLS2Far}
F.~{Rostami Ghadi}, K.-K. {Wong}, F.~J. {Lopez-Martinez}, W.~K. {New}, H.~{Xu}, and C.-B. {Chae}, ``Physical layer security over fluid antenna systems,'' {\em arXiv:2402.05722v1}, pp.~1--11, 2024.

\bibitem{Sanchez}
J.~D. Vega-Sánchez, L.~F. Urquiza-Aguiar, H.~R.~C. Mora, N.~V.~O. Garzón, and D.~P.~M. Osorio, ``Fluid antenna system: Secrecy outage probability analysis,'' {\em IEEE Trans. Veh. Technol.}, vol.~73, no.~8, pp.~11458--11469, 2024.

\bibitem{Gradshteyn}
I.~S. Gradshteyn and I.~M. Ryzhik, {\em Table of Integrals, Series and Products}.
\newblock San Diego, CA, USA: Academic Press, 7~ed., 2007.

\bibitem{Perim}
V.~{Perim}, J.~D.~V. {Sánchez}, and J.~C. S.~S. {Filho}, ``Asymptotically exact approximations to generalized fading sum statistics,'' {\em IEEE Trans. Wireless Commun.}, vol.~19, no.~1, pp.~205--217, 2020.

\bibitem{Barros}
M.~{Bloch} and et~al, ``Wireless information-theoretic security,'' {\em IEEE Trans. Inf. Theory}, vol.~54, no.~6, pp.~2515--2534, 2008.

\end{thebibliography}


\end{document}